\newcommand{\R}{{\mathbb R}}
\newcommand{\F}{{\mathcal F}}
\newcommand{\cP}{{\mathcal P}}
\newcommand{\G}{{\mathcal G}}
\newcommand{\E}{{\mathbb E}}
\newcommand{\A}{{\mathcal A}}
\newcommand{\cF}{{\mathcal F}}
\newcommand{\cG}{{\mathcal G}}
\newtheorem{theorem}{Theorem}[section]
\newtheorem{lemma}[theorem]{Lemma}
\newtheorem{proposition}[theorem]{Proposition}
\newtheorem{corollary}[theorem]{Corollary}
\theoremstyle{remark}
\newtheorem{definition}[theorem]{Definition}
\newtheorem{assumption}[theorem]{Assumption}
\theoremstyle{remark}
\newcommand{\conv}{{\mathrm{conv}\,}}
\newcommand{\cl}{{\mathrm{cl}\,}}
\newcommand{\dom}{{\mathrm{dom}\,}}
\newcommand{\setint}{{\mathrm{int}\,}}
\title{Multistage Portfolio Optimization: A Duality Result
in Conic Market Models} 
\author[1]{Robert Bassett}
\author[2]{Khoa L\^e}
\affil[1]{University of California, Davis}
\affil[2]{University of Calgary}
\begin{document} 
\maketitle
\begin{abstract}
We prove a general duality result for multi-stage portfolio optimization
problems in markets with proportional transaction costs. The financial
market is described by Kabanov's model of foreign exchange markets
over a finite probability space and finite-horizon discrete
time steps. This framework allows us to compare vector-valued 
portfolios under a partial ordering, so that our model does not
require liquidation into some numeraire at terminal time.

We embed the vector-valued portfolio problem into the set-optimization
framework, and generate a problem dual to portfolio optimization.
Using recent results in the development of set optimization, we then show that a
strong duality relationship holds between the problems.
\end{abstract}
\section{Introduction}

Portfolio optimization problems have a long and rich history.
Traditionally, portfolio optimization took place in models without
transaction costs, as in \cite{Merton}. Portfolio optimization
problems in financial market models which include proportional
transaction costs first appeared in the work of Magill and Constantinides 
\cite{Magill}. The two asset
model was solved rigorously and improved upon shortly thereafter
\cite{Davis} \cite{SonerShreve}. Since then, much more focus has
been placed on deriving results in markets with transaction costs, as
researchers tried to develop results analogous to the classical case.

In this paper, we consider the conic market model initially developed
by Kabanov \cite{K99} for modeling foreign transaction market. The
conic market model 
expresses portfolios in terms of the number of physical
assets they contain, as opposed to their values. This allows the
formulation of wealth processes without the explicit use of
stochastic integration. Though this quality may be seem surprising to
portfolio optimization veterans, it is attractive in terms of its simplicity
and intuitive nature. In order to compare portfolios, we consider them
as assets of vectors under a partial ordering, and invoke the theory
of set optimization to formulate the portfolio problem.


Set optimization is primarily motivated by the desire to optimize with
respect to a non-total order relation. Vector-valued optimization fits
directly into this framework, with component-wise comparison. However,
we prefer to work in a set-optimization framework as opposed to the
less general vector-optimization because of its succinct theory
\cite{SetOpt}. Following the work of 
\cite{Ham09}, we introduce an ordering of sets which
generates a complete lattice. This allows us to define corresponding
notion of infimums and supremums of sets--a fundamental step in the
formulation of the portfolio optimization problem. We then use the
tools in \cite{HaLo14} to formulate a set-valued dual to the
portfolio optimization problem. Because we consider the multistage
problem, our results are generalize those in
\cite{Sophie}.

Constructing a primal-dual pair of problems for set-valued
portfolio optimization provides insight into the relationship
between traditional portfolio optimization theory and the
proportional transaction cost case. But it is also our hope that the
results contained here would be of more than just theoretical
interest. Recent work \cite{primalDual} \cite{hamel2014benson} has been
investigating computational techniques for solving set-valued
optimization problems. In particular, \cite{primalDual} uses both the
primal and dual formulation of a set optimization problem to work
towards computing a solution. In this sense, the results in this paper
provide a valuable relationship which will help bring the portfolio
optimization problem considered closer practioners, as computational
techniques progress.

The rest of this paper is organized as follows. In the first section,
we introduce the material which we use to formulate the set-valued
portfolio optimization problem. This includes a review of the conic market
model in addition to a summary of the set-optimization tools and
techniques the we will use in our problem formulation. The next
section explicitly formulates the multi-period utility maximization
problem. In the third section, we discuss duality in the set
optimization framework. The fourth section is devoted to our main
results, the formulation of a dual problem and a proof that a strong
duality relationship holds. The last section applies the main results
to an example utility maximization problem.

\textbf{Acknowledgment:} The paper was initiated while the authors attended the workshop Mathematics Research Communities 2015 at Snowbird. The authors thank Birgit Rudloff and Zach Feinstein for their encouragement and guidance into the subject. Part of the paper was completed when the second author was in residence at the Mathematical Sciences Research Institute in Berkeley, during which he was supported by the National Science Foundation under Grant No. DMS-1440140. 

\section{Preliminaries}

\subsection{Conical Market Model}
In this section, we recall the framework of the conic market model
with transaction costs introduced in \cite{K99}, though we primarily 
follow the development in \cite{Sch04}.

Consider a financial market which consists of $d$ traded assets. In
classical models, we assume that at some terminal time $T$ all assets
are liquidated, i.e. converted to some numeraire. In certain
applications, this is unrealistic. For example, an agent with a 
portfolio consisting of
assets in both US and European markets should not need to choose
between liquidation into Euro or USD to establish its relative value.
For this reason, we use a numeraire-free approach by considering
vector-valued portfolios. In particular, we express portfolios in
terms of the number of physical units of each asset, instead of the
value of those assets with respect to a numeraire. This approach is
especially interesting when liquidation into some numeraire has
an associated transaction cost. In this case conversion to a unified 
currency is irreversible, and different choices of numeraire could
result in different relative values of portfolios.

We consider a market in which transaction costs are proportional to
the number of units exhanged. To model these costs, we introduce
the notion of a bid-ask matrix.
\begin{definition}
A \emph{bid-ask matrix} is a $d \times d$ matrix $\Pi$ such that its
entries $\pi^{ij}$ satisfy
\begin{enumerate}
\item $\pi^{ij} > 0$, for $1 \leq i, j \leq d$
\item $\pi^{ii}=1$, for $1 \leq i \leq d$.
\item $\pi^{ij} \leq \pi^{ik} \pi^{kj}$, for $i \leq i, j, k \leq d$.
\end{enumerate}
\end{definition}
The terms of trade in the market are given by the bid-ask matrix, in
the sense that the entry $\pi^{ij}$ gives the number of units of asset
$i$ which can be exchanged to one unit of asset $j$. Thus the pair
$\{\frac{1}{\pi^{ji}}, \pi^{ij} \}$ denotes the bid and ask prices of
the asset $j$ in terms of the asset $i$. A financial interpretation of
the first and second properties of a bid-ask matrix is straightforward, with the third
condition ensuring that an agent cannot achieve a better exchange rate
through a series of exchanges than exchanging directly.

Next, we consider the notions of solvency and available portfolios.
Recall that, given a set $C \subseteq \R^d$, the \emph{convex cone
generated by $C$} is the set
$$\text{cone}(C)=\{ \sum_{i = 1}^{n} \lambda_{i} y_{i} : y_{i} \in C, \lambda_{i}
\geq 0, 1 \leq i \leq n, n \in \mathbb{N} \}.$$

\begin{definition} For a given bid-ask matrix $\Pi$, the
\emph{solvency cone} $K(\Pi)$ is the convex cone in $\R^{d}$ generated by
the unit vectors $e^i$ and $\pi^{ij} e^{i} - e^{j}$, $1 \leq i, j \leq d.$
\end{definition}
Solvent positions in vector valued portfolios are those which can be
traded to the zero portfolio.
The vector $\pi^{ij} e^{i} - e^{j}$, which consists of $\pi^{ij}$ 
long position in asset $i$ and one short in asset $j$, is solvent
because the terms of trade given by $\Pi$ allow exchanging $\pi^{ij}
e^{i}$ to $e^{j}$. It follows that any non-negative linear
combination of $\pi^{ij} e^{i} - e^{j}$ is also solvent. We also allow
an agent to discard non-negative quantities of an asset in order to
trade to the zero portfolio, which justifies including the $e^i$
vectors in the solvency cone definition.

What is that set of portfolios that can be obtained from the zero
portfolio, according
to the terms of trade governed by $\Pi$? Similar to the definition
of the solvency cone, it consists of vectors $e^{j} - \pi^{ij}
e^{i}$, which correspond to trades at the exchange rate given by
$\Pi$. Again permitting trades where agents discard resources, we see
that the set of
portfolios available at price zero is the cone $-K(\Pi)$.

Given a cone $K \subseteq \R^d$, we denote by $K^{+}$ the positive
polar cone of $K$, i.e.,
$$K^{+} = \{w \in \mathbb{R}^{d} : \langle v, w \rangle \geq 0 \text{
for } v \in K(\Pi) \}.$$
Recall that the interior of $K^{*}$ is the set 
$$\setint K^{*} = \{w : \langle w,x\rangle >0, \forall x \in K, x \neq 0 \}.$$

\begin{definition}
A nonzero element $w \in \mathbb{R}^d$ is a \emph{consistent price system}
for the bid-ask matrix $\Pi$ if $w$ is in the positive polar cone of
$K(\Pi)$, so that
$$ w \in K^{+}(\Pi) = \{w \in \mathbb{R}^{d}: \langle v, w \rangle
\geq 0 \text{ for } v \in K(\Pi) \}.$$
The set of all consistent price systems for a bid-ask matrix is then
simply $K^{+}(\Pi) \setminus \{ 0 \}$.
\end{definition}

The notion of a consistent price system has an important financial
interpretation. A price system $w$ gives a non-negative price
 $w^{i}$ for each asset $i$. One interpretation of the definition of a
consistent pricing system is
that, if we fix some numeraire asset $i$, then $w$ satisfies the 
condition that the frictionless exchange rate $\frac{w^{j}}{w^{i}}$
for asset $j$ is less than $\pi^{ij}$. Allowing for arbitrary choice
of numeraire $i$ and asset $j$, this is equivalent to
$$\{w \in \mathbb{R}^{d}_{+} \setminus \{ 0 \}: \frac{w^{j}}{w^{j}} \geq \pi^{ij} \text{ for } 1 \leq i, j \leq d \}.$$
One can easily show that this set is in fact equal to
$K^{+}(\Pi) \setminus \{ 0 \}$, the set of all price systems
consistent with $\Pi$.

Fixing a filtered probability space $(\Omega, (\cF)_{t=0}^{T},
\mathbb{P})$, we model a financial market by $(\Pi_{t})_{t=0}^{T}$, an
$\cF$ adapted process taking values in the set of bid-ask matrices. Such 
a process will be called a bid-ask process. We make the following
simplifying assumptions.
\begin{assumption}\label{assumptions} $(\Omega,(\cF)_{t=0}^{T}, \mathbb{P})$ satisfies
		\begin{itemize} 
		\item $\mathcal{F}_{0}= \{\emptyset, \Omega\}$ is trivial.
		\item The model is in discrete time with $t=0,...,T$
		\item The probability space $\Omega$ is finite, with $|\Omega|=N$ 
		\item Each element in $\Omega$ has nonzero probability, i.e.
		$\mathbb{P}[\omega_{n}]=p_n >0$, where $\Omega = \{ \omega_{1},
		\omega_{2},..., \omega_{N} \}$ and $n=1,..,N$.
		\end{itemize}
\end{assumption}
The assumption that $\Omega$ is finite means that
 we can identify the space of all
$d$-dimensional random variables with $\R^{d \times N}$ and inner
product $\mathbb{E} \langle x, y \rangle$, where $x$ and $y$ are random vectors.
As a result, the different
topologies $L^{\infty}(\Omega, \mathcal{F}, \mathbb{P})$, $L^{1}(\Omega,
\mathcal{F}, \mathbb{P})$, $L^{0}(\Omega, \mathcal{F}, \mathbb{P})$,
etc. on
the set of all $\mathbb{R}^d$-valued random variables $X: \Omega \to
\mathbb{R}^d$
are isomorphic. We will refer to this topology simply as
$L^{p}(\Omega, \mathcal{F}, \mathbb{P})$ for some $p \in (1, \infty)$,
in order to make clear when we are referring to the dual space
$L^{q}(\Omega, \mathcal{F}, \mathbb{P})$ where $\frac{1}{p} +
\frac{1}{q} =1$. For ease of notation, we will denote these
spaces $L^{p}(\mathcal{F}; \mathbb{R}^d)$ and $L^{q}(\mathcal{F};
\mathbb{R}^d)$. For the sake of notation, we will denote the components of a vector
$x \in \R^{d \times N}$ by $x_{i}(\omega)$ for $\omega \in \Omega$, $1
\leq i \leq N.$

Again exploiting the finiteness of $\Omega$, we know that any cone
in $L^{p}(\cF_t, \R^d)$ generated by a finite set of random vectors $\{x_{i}
\}_{i=1}^{m}$ is generated by $\{x_{i} \mathbbm{1}_{\Gamma_{j}}\}$ in
$\R^{d \times N}$, where $\{\Gamma_{j}\}_{j \in J}$ is the set of atoms of $\cF_{t}$.

Let $(\Pi_{t})_{t=0}^{T}$ be a bid-ask process. This generates a
cone-valued process $(K_{t})_{t=0}^{T}$ where each $K_{t}$ is an
associated solvency cone. We denote by $L^{p}(\mathcal{F}_{t}, K_{t})$
the set
$$ \{X \in L^{p}(\mathcal{F}_{t}; \mathbb{R}^d): X_{t}(\omega) \in
K_{t}(\omega) \}.$$
for each $\omega \in \Omega$.

We can now define the notion of a self-financing portfolio.
\begin{definition}
An $\R^d$-valued adapted process
$\vartheta=(\vartheta_t)_{t=0}^T$ is called a self-financing portfolio
process if the increments
 \begin{equation*}
\xi_t(\omega):=\vartheta_t(\omega)- \vartheta_{t-1}(\omega)
\end{equation*}
belong to the cone $-K_t(\omega)$ of portfolios
available at price zero, for all time $t=0,\dots,T$. We also put
$\vartheta_{-1}=0$ by convention.
\end{definition}
For each $t=0,\dots, T$, we denote by $A_t$ the convex cone in
$L^p(\cF_t;\R^d)$ formed by the random variables $\vartheta_t$, where
$\vartheta=(\vartheta_i)^T_{i=0}$ runs through the self-financing
portfolio processes. We always assume that the initial portfolio
$\vartheta_0$ is deterministic. $A_T$ may be then interpreted as the
set of positions available at time $T$ from an initial endowment
$0\in\R^d$. More precisely, if we denote by $\mathbbm{1} \in
L^{p}(\mathcal{F}_{0}, \mathbb{R}^d)$ the constant random variable
that assumes the value $1$, we have the following result:
\begin{proposition}
For each $t=1,...,T$,
\begin{equation*}
A_t=-K_0 \mathbbm{1}-L^p(\cF_1;K_1)-\cdots-L^p(\cF_t;K_t)\,.  
\end{equation*} 
\end{proposition}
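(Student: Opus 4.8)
The plan is to prove the claimed identity by establishing the two inclusions separately, after reducing the whole statement to a telescoping-sum identity; an induction on $t$ via the recursion $A_t = A_{t-1} - L^p(\cF_t;K_t)$ is also available, but it is not really needed. Throughout I will use that, since $\Omega$ is finite, a random vector lies in $L^p(\cF_s;K_s)$ if and only if it is $\cF_s$-measurable and takes values in $K_s(\omega)$ for every $\omega$ — this is the identification of finitely generated cones over the atoms of $\cF_s$ recorded above — and that each $K_s(\omega)$ is a convex cone containing $0$; in particular $0 \in -K_s$ and each $L^p(\cF_s;K_s)$ is a convex cone. I will also use that $\cF_0$ is trivial, so $K_0$ is a deterministic convex cone in $\R^d$ and $L^p(\cF_0;K_0) = K_0\mathbbm{1}$.

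For the inclusion $A_t \subseteq -K_0\mathbbm{1} - L^p(\cF_1;K_1) - \cdots - L^p(\cF_t;K_t)$, I would take $\vartheta_t \in A_t$ and let $\vartheta = (\vartheta_s)_{s=0}^T$ be a self-financing portfolio process realizing it. Writing $\xi_s = \vartheta_s - \vartheta_{s-1}$ and summing the telescoping sum from $s=0$ (recall $\vartheta_{-1} = 0$) gives $\vartheta_t = \sum_{s=0}^t \xi_s$. By definition $\xi_s \in -K_s(\omega)$ for every $\omega$, and $\xi_s$ is $\cF_s$-measurable because $\vartheta$ is adapted; hence $\xi_s \in -L^p(\cF_s;K_s)$ for $s \ge 1$, while $\xi_0 = \vartheta_0$ is deterministic by the standing assumption on the initial portfolio and lies in $-K_0$, i.e. $\xi_0 \in -K_0\mathbbm{1}$. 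This exhibits $\vartheta_t$ as an element of the right-hand side.

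For the reverse inclusion I would start from $X = -k_0\mathbbm{1} - \eta_1 - \cdots - \eta_t$ with $k_0 \in K_0$ and $\eta_s \in L^p(\cF_s;K_s)$ for $1 \le s \le t$, and define a process by the partial sums $\vartheta_s = -k_0\mathbbm{1} - \eta_1 - \cdots - \eta_s$ for $0 \le s \le t$ and $\vartheta_s = \vartheta_t$ for $t < s \le T$. Then $\vartheta$ is adapted, $\vartheta_0 = -k_0\mathbbm{1}$ is deterministic, and its increments are $\xi_0 = -k_0\mathbbm{1} \in -K_0$, $\xi_s = -\eta_s \in -L^p(\cF_s;K_s)$ for $1 \le s \le t$, and $\xi_s = 0 \in -K_s$ for $s > t$; so $\vartheta$ is a self-financing portfolio process with $\vartheta_t = X$, whence $X \in A_t$.

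I expect no genuine obstacle here: the one point needing a little care — and the reason the constant extension past time $t$ enters — is that $A_t$ is defined through portfolio processes living on the whole horizon $\{0,\dots,T\}$, so a candidate element of $A_t$ must be realized as the time-$t$ value of some process defined all the way to $T$, which is exactly what the zero-increment extension (using $0 \in -K_s$) achieves. The measurability and integrability bookkeeping is trivialized by the finiteness of $\Omega$, and the passage between the pointwise and the $L^p$ descriptions of $L^p(\cF_s;K_s)$ is precisely the atom-wise cone identification quoted above.
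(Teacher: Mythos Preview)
Your proof is correct and follows essentially the same telescoping-sum argument as the paper. Your version is in fact slightly cleaner: you work directly with a single self-financing process rather than with a conic combination (the set of time-$t$ values is already a cone, so the combination step in the paper is redundant), and you supply the reverse inclusion explicitly---including the constant extension past time $t$ using $0\in -K_s$---which the paper simply declares symmetric and omits.
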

\begin{proof}
Assume $x \in A_t$. Then $x$ is a convex combination of random
variables 
$$x=\sum_{j=1}^{n} \lambda^{j} \vartheta^{j}_{t}$$ 
where for each $j$, $(\vartheta_{i}^{j})_{i=0}^{T}$ is a self-financing 
portfolio and $\lambda^{j} \geq 0$. We rewrite $x$ as
$$x = \sum_{j=1}^{n} \lambda^{j} \left( \vartheta^{j}_{t} -
\vartheta^{j}_{t-1} + \vartheta^{j}_{t-1} - ... -\vartheta^{j}_{0} +
\vartheta^{j}_{0} \right).$$
Expanding this sum, each $\sum_{j=1}^{n} \lambda^{j}(\vartheta^{j}_{i}
-\vartheta^{j}_{i-i}) \in -K(\Pi_{i})$ since $-K(\Pi_{i})$ is a cone and
$\vartheta^{j}_{i} - \vartheta^{j}_{i-1} \in -K(\Pi_{i})$ for every
$j$. We have established that
$$A_t \subseteq -K_0 \mathbbm{1}-L^p(\cF_1;K_1)-\cdots-L^p(\cF_t;K_t)\,.$$  
The reverse containment follows by a symmetric argument and is
omitted.
\end{proof}

Similarly, if one starts with an initial endowment $x_0\in\R^d$,
then the collection of all random portfolios available at time $T$ is
given by $A_T(x_0)=x_0 \mathbbm{1}+A_T$. Explicitly, we have
\begin{equation}\label{def:ATx}
A_T(x_0)=x_0  \mathbbm{1} -K_0 \mathbbm{1}-L^p(\cF_1;K_1)-\cdots-L^p(\cF_T;K_T)\,,
\end{equation} with convention $A_T(0)=A_T$.

Another important concept in a financial market model is the concept
of arbitrage.  In the conic market model framework, the bid-ask
process
$(\Pi_t)_{t=0}^T $ is said to satisfy the no arbitrage property
if
\begin{equation}
\label{con:na} 
A_T\bigcap L^p(\cF_T;\R^d_+)=\{0\}\,.
\end{equation}
We will assume that our market model satisfies the no arbitrage property.

In classical financial market models, no arbitrage is intimately
connected to the existence of an equivalent martingale measure. The
corresponding notion in the conic market model is a consistent pricing
process.
\begin{definition} An
adapted $\R^d_+$-valued process $(Z_t)_{t=0}^T$ is called a consistent
 pricing process for the bid-ask process
$(\Pi_t)_{t=0}^T$ if $Z$ is martingale and $Z_t(\omega)$ lies in
$K_t(\omega)^{+} \setminus\{0\}$ for each $t=0,\dots,T$.
\end{definition}

The following extension of the Fundamental Theorem on Asset Pricing, due to
Kabanov and Stricker, \cite{KS01}, establishes the connection between no arbitrage and
consistent pricing processes.
\begin{theorem} \label{theor:FTAP}
Let $\Omega$ be finite. The bid-ask process $(\Pi_{t})_{t=0}^{T}$ 
satisfies the no arbitrage condition if and only if there is a
consistent pricing system $Z=(Z_{t})^{T}_{t=0}$ for
$(\Pi_{t})_{t=0}^{T}$.
\end{theorem}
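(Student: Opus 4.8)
The plan is to prove the two implications separately. Throughout, finiteness of $\Omega$ lets me treat $L^p(\cF_T;\R^d)\cong\R^{d\times N}$ as a Euclidean space with pairing $\E\langle x,y\rangle$ and identify it with its dual $L^q(\cF_T;\R^d)$. One preliminary observation will be used repeatedly: for any solvency cone $K$, testing the polar inequality against the generators $e^i$ and $\pi^{ij}e^i-e^j$ of $K$ gives $w^i\ge 0$ and $w^j\le\pi^{ij}w^i$ for every $w\in K^+$, so a \emph{nonzero} $w\in K^+$ has all coordinates strictly positive, i.e. $K^+\setminus\{0\}\subseteq\setint\R^d_+$; also $\R^d_+\subseteq K$ directly from the generators $e^i$, whence $L^p(\cF_T;\R^d_+)\subseteq L^p(\cF_T;K_T)$.

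For ``consistent pricing process $\Rightarrow$ no arbitrage,'' let $Z$ be a consistent pricing process and take $x\in A_T\cap L^p(\cF_T;\R^d_+)$. By the preceding Proposition, $x=-\eta_0-\eta_1-\cdots-\eta_T$ with $\eta_0\in K_0$ deterministic and $\eta_t\in L^p(\cF_t;K_t)$. Since each $\eta_t$ is $\cF_t$-measurable and $Z$ is a martingale, $\E\langle Z_T,\eta_t\rangle=\E\langle Z_t,\eta_t\rangle\ge 0$, as $Z_t(\omega)\in K_t(\omega)^+$ pairs non-negatively with $\eta_t(\omega)\in K_t(\omega)$; summing gives $\E\langle Z_T,x\rangle\le 0$. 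On the other hand $Z_T>0$ and $x\ge 0$ pointwise force $\E\langle Z_T,x\rangle\ge 0$, so $\langle Z_T(\omega),x(\omega)\rangle=0$ for every $\omega$, and strict positivity of $Z_T(\omega)$ yields $x=0$. Hence the no arbitrage property holds.

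For the converse, assume no arbitrage. First I claim $A_T$ is a closed convex cone: by the representation $A_T=-K_0\mathbbm{1}-L^p(\cF_1;K_1)-\cdots-L^p(\cF_T;K_T)$ it is a finite sum of cones, each finitely generated because $\cF_t$ has finitely many atoms and each $K_t(\omega)$ is a finitely generated cone; so $A_T$ is polyhedral, hence closed. Now fix a coordinate $i$ and an atom $\omega$: the random variable $e^i\mathbbm{1}_{\{\omega\}}$ is a nonzero element of $L^p(\cF_T;\R^d_+)$, so by no arbitrage $e^i\mathbbm{1}_{\{\omega\}}\notin A_T$, and since $A_T$ is a closed convex cone a separating hyperplane through the origin gives $z_{i,\omega}$ with $\E\langle z_{i,\omega},y\rangle\le 0$ for all $y\in A_T$ and $\E\langle z_{i,\omega},e^i\mathbbm{1}_{\{\omega\}}\rangle>0$. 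Since $-L^p(\cF_T;\R^d_+)\subseteq A_T$, each $z_{i,\omega}$ is pointwise non-negative, and it is strictly positive in coordinate $i$ at $\omega$. Summing over all $(i,\omega)$, the random variable $Z_T:=\sum_{i,\omega}z_{i,\omega}$ is pointwise non-negative, satisfies $\E\langle Z_T,y\rangle\le 0$ for all $y\in A_T$, and is coordinatewise strictly positive.

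To finish, set $Z_t:=\E[Z_T\mid\cF_t]$ for $t=0,\dots,T$. This is a martingale by construction, and as the conditional expectation of a coordinatewise strictly positive random variable over a finite space whose atoms all have positive probability, it is itself coordinatewise strictly positive; in particular $Z_t(\omega)\ne 0$. To see $Z_t(\omega)\in K_t(\omega)^+$, fix $v\in K_t(\omega)$ and let $\Gamma$ be the atom of $\cF_t$ containing $\omega$; by adaptedness $K_t$ is constant on $\Gamma$, so $v\mathbbm{1}_\Gamma\in L^p(\cF_t;K_t)$ and therefore $-v\mathbbm{1}_\Gamma\in A_T$, giving $0\ge\E\langle Z_T,-v\mathbbm{1}_\Gamma\rangle=-\mathbb{P}[\Gamma]\,\langle Z_t(\omega),v\rangle$, i.e. $\langle Z_t(\omega),v\rangle\ge 0$. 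Thus $Z$ is a consistent pricing process. I expect the main obstacle to be this converse direction, specifically the two points where finiteness of $\Omega$ is essential: the closedness of $A_T$ (without which the separation argument fails --- in general market models one needs an additional robust no-arbitrage hypothesis here), and the passage from a single separating functional to one that is strictly positive in \emph{every} coordinate, which is the transaction-cost analogue of upgrading an absolutely continuous martingale measure to an equivalent one.
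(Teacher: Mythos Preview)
Your proof is correct, but the paper does not actually prove this theorem: it is stated with attribution to Kabanov and Stricker \cite{KS01} and used as a black box. So there is no ``paper's own proof'' to compare against. What the paper \emph{does} prove is the closely related Lemma~\ref{lem:KhoaLemma}, which asserts that no arbitrage is equivalent to $(-A_T^+)\cap\setint L^q(\cF_T;\R^d_+)\neq\emptyset$; its forward direction is essentially your construction of a strictly positive $Z_T\in -A_T^+$, though obtained by a single separation of $A_T$ from the compact simplex $\conv\{e^i\mathbbm{1}_{\{\omega\}}\}$ rather than your sum of pointwise separators. Your argument goes further than that lemma by building the full martingale $(Z_t)$ via conditional expectations and verifying $Z_t(\omega)\in K_t(\omega)^+$ atom by atom, which is exactly what is needed for the theorem as stated. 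Both separation approaches work and exploit finiteness of $\Omega$ in the same essential places (polyhedrality/closedness of $A_T$, and the upgrade to strict positivity in every coordinate at every $\omega$), so your closing remarks correctly identify where the argument would break down in an infinite model.
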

This theorem is a fundamental component in the proof of our main
duality result, Theorem \ref{thm:DualProb}.

We will make use of this result in the form of the following lemma 
\begin{lemma} \label{lem:KhoaLemma} 
A bid-ask process satisfies the no arbitrage condition if and only if
$$(-A_{T}^{+}) \bigcap \setint(L^{q}(\cF_{T}, \R^{d}_{+})) \neq
\emptyset.$$
\end{lemma}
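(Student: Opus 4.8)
The plan is to read the equivalence off from the Fundamental Theorem on Asset Pricing (Theorem~\ref{theor:FTAP}) together with the representation $A_T=-K_0\mathbbm 1-L^p(\cF_1;K_1)-\cdots-L^p(\cF_T;K_T)$ established in the Proposition above. The structural fact that makes the two conditions line up is this: \emph{every nonzero element of $K(\Pi)^+$ is strictly positive in each coordinate.} Indeed, since each $e^i$ generates $K(\Pi)$ we have $K(\Pi)^+\subseteq\R^d_+$; and if some $w\in K(\Pi)^+$ had $w^{j_0}=0$, then pairing $w$ with the generators $\pi^{j_0k}e^{j_0}-e^k\in K(\Pi)$ gives $-w^k\ge 0$, hence $w^k=0$ for all $k$, i.e.\ $w=0$. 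Applied to a consistent pricing process $Z=(Z_t)_{t=0}^T$, this says each $Z_T(\omega)\in K_T(\omega)^+\setminus\{0\}$ is coordinatewise strictly positive, which (as $\Omega$ is finite, so the interior is taken coordinatewise at each atom) is precisely $Z_T\in\setint L^q(\cF_T;\R^d_+)$.

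For the forward implication, suppose no arbitrage holds and let $Z$ be a consistent pricing process, which exists by Theorem~\ref{theor:FTAP}. I will check $Z_T\in -A_T^+$, i.e.\ $\E\langle x,Z_T\rangle\le 0$ for every $x\in A_T$. Writing $x=-k_0\mathbbm 1-\xi_1-\cdots-\xi_T$ with $k_0\in K_0$ and $\xi_t\in L^p(\cF_t;K_t)$ via the Proposition, condition each summand on $\cF_t$ and use the martingale property $\E[Z_T\mid\cF_t]=Z_t$ to get $\E\langle\xi_t,Z_T\rangle=\E\langle\xi_t,Z_t\rangle\ge 0$, the inequality holding because $\xi_t(\omega)\in K_t(\omega)$ and $Z_t(\omega)\in K_t(\omega)^+$ pointwise; likewise $\E\langle k_0\mathbbm 1,Z_T\rangle=\langle k_0,Z_0\rangle\ge 0$. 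Hence $\E\langle x,Z_T\rangle\le 0$, so $Z_T\in -A_T^+$, and combining with the first paragraph gives $Z_T\in(-A_T^+)\cap\setint L^q(\cF_T;\R^d_+)$, so the intersection is nonempty.

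For the converse, take $w\in(-A_T^+)\cap\setint L^q(\cF_T;\R^d_+)$ and let $x\in A_T\cap L^p(\cF_T;\R^d_+)$. From $w\in -A_T^+$ and $x\in A_T$ we get $\E\langle x,w\rangle\le 0$, while from $x(\omega)\in\R^d_+$, $w(\omega)$ strictly positive in each coordinate, and $\mathbb P[\omega]>0$ we get $\E\langle x,w\rangle\ge 0$ with equality forcing $x(\omega)=0$ at every $\omega$. Hence $x=0$, which is exactly \eqref{con:na}, so no arbitrage holds.

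The only place needing an idea rather than bookkeeping is the coordinatewise positivity of consistent price systems in the first paragraph: this is what upgrades the easy inclusion "$-A_T^+\subseteq L^q(\cF_T;\R^d_+)$'' (which follows from $-L^p(\cF_T;\R^d_+)\subseteq A_T$) to the statement that $Z_T$ lies in the \emph{interior}. Beyond that, the main points to be careful with are the sign/polar convention for $A_T^+$ and keeping the conditioning step honest; I do not anticipate a genuine obstacle.
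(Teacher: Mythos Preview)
Your proof is correct. The converse direction is essentially identical to the paper's: take $w$ in the intersection and use $\E\langle x,w\rangle\le 0$ together with strict positivity of $w$ to force $x=0$.

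The forward direction, however, follows a genuinely different route. The paper argues directly from the no-arbitrage identity $A_T\cap L^p(\cF_T;\R^d_+)=\{0\}$: it observes that $A_T$ is a closed polyhedral cone (finite $\Omega$), separates it strictly from the compact simplex $C=\conv\{e_i(\omega)\}$ via the separating-hyperplane theorem, and reads off that the separating functional $z$ lies in $(-A_T^+)\cap\setint L^q(\cF_T;\R^d_+)$. No pricing process is invoked. You instead pull in Theorem~\ref{theor:FTAP} to produce a consistent pricing process $Z$, prove the pleasant structural fact that $K(\Pi)^+\setminus\{0\}\subseteq\setint\R^d_+$ (so $Z_T$ is coordinatewise strictly positive), and then use the martingale property together with the decomposition of $A_T$ to show $Z_T\in -A_T^+$. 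Your argument therefore exhibits the witness explicitly as the terminal value of a consistent pricing process, which matches the paper's prefatory remark that the lemma is ``this result in the form of'' FTAP more closely than the paper's own proof does. The trade-off is that you rely on the heavier Kabanov--Stricker theorem, whereas the paper's separation argument is self-contained once closedness of $A_T$ is in hand.
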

\begin{proof}
For the forward direction, assume a bid-ask process satisfies the no
arbitrage condition. Then 
$$A_{T} \cap L^{p}(\cF_{T}, \R_{+}^{d}) = \{ 0 \}.$$
Since $\Omega$ is finite, $A_{T}$ is the sum of finitely
generated closed convex cones, so it is a finitely generated convex
cone, and hence closed. Let
$C:=\conv(\{e(\omega)_{i}, 1 \leq i \leq d, \omega \in \Omega \})$,
where each $\{e(\omega)_{i}$ is a unit vector in $\R^{d \times N}$. Since
$C$ is the convex hull of a finite set of points in $L^{p}(\cF_{T}, \R_{+}^{d})$, it is
compact. Obviously $0 \not \in C$. By the separation theorem (in the
case that one set is closed and the other compact) there is a
nonzero $z \in \R^n$ that strictly separates $C$ and $A_T$. That is,
$$\sup_{x \in A_T} \langle z , x \rangle < \inf_{y \in C} \langle z, y
\rangle.$$
$C$ is compact, so the expression on the right-hand side of the
inequality is finite. Since $A_T$ is a cone, the left-hand side of
the inequality must then be zero, so $z \in -A_T^{+}$. Furthermore, 
$\langle z, y \rangle >0$ for each $y \in C$, so that $\langle z,
\lambda y \rangle > 0$ for all $y \in C$ and $\lambda \neq 0$. Since
$C$ generates $L^{p}(\cF_T)$, we have that $\langle z, \lambda y
\rangle >0$ for all $y \in L^{p}(\cF_{T}, \R_{+}^{d})$ with $y \neq
0$. It follows that 
$$z \in \setint \left( (L^{p}(\cF_{T},
\R_{+}^{d}))^{+} \right) = \setint \left( L^{q}(\cF_{T}, \R_{+}^{d})
\right).$$

For the reverse direction, assume that 
$$(-A_{T}^{*}) \cap \setint(L^{q}(\cF_{T}, \R^{d}_{+}))
\neq\emptyset.$$
Then there is a $z$ such that 
$$\langle z, x \rangle \leq 0 \text{ for }x \in A_{T} \text{ and } \langle
z, y \rangle > 0 \text{ for } y \in K \setminus \{ 0 \}.$$
This obviously implies that $A_T$ and $K \setminus \{0 \}$ are
disjoint.
\end{proof}
\subsection{Set Optimization}

In this section, we review the components of set-valued optimization
that will be necessary to introduce the portfolio optimization problem.
For a more detailed exposition of
the set-valued optimization framework and the corresponding duality
theory, see \cite{SetOpt, Ham09, HaLo14}.

We begin by constructing a suitable notion of ``order'' for sets.
Let $Z$ be a non-trivial real linear space. Given a convex cone $C 
\subsetneq Z$ with $0 \in C$, we have a preordering of
$Z$, denoted by $\leq_{C}$, which is defined as
$$z_{1} \leq _{C} z_{2} \iff z_{2} - z_{1} \in C$$
for any $z_{1}, z_{2} \in Z$. The following are equivalent to $z_{1}
\leq_{C} z_{2}$,
$$z_{1} \leq_{C} z_{2} \iff z_2 - z_1 \in C \iff z_2 \in z_1 +C \iff
z_1 \in z_2 -C.$$
These last two expressions can be used to extend $\leq_{C}$ from $Z$
to $\cP(Z)$, the power set of $Z$. Given $A,B \in \cP(Z)$, we define two
possible extensions.
$$A \preccurlyeq_{C} B \iff B \subseteq A+C$$
$$A \curlyeqprec_{C} B \iff A \subseteq B-C.$$
We use $+$ to denote Minkowski addition of sets, with set convention
that $A+ \emptyset = \emptyset+A= \emptyset$ for all $A \in \cP(Z)$.

In what follows we will exclusively discuss the relation
$\preccurlyeq_{C}$, which is appropriate for set-valued minimization. Each of the
results we include has a corresponding result for $\curlyeqprec_{C}$ in the
maximization context, which we omit. For further details see
\cite{SetOpt}. 

In addition, we assume that $Z$ is equipped with a Hausdorff, locally
convex topology. We consider the space
$$\G(Z,C) = \{A \subseteq Z: A = \cl \conv(A+C) \}$$
where $\cl \conv$ is the closure of the convex hull. We abbreviate
$\G(Z, C)$ to $\G(C)$, when $Z$ is clear from the context. We define an
associative and commutive binary operation $\oplus: \G(C) \times \G(C)
\to \G(C)$ by 
$$A \oplus B = \cl (A+B)$$

Observe that in
$\G(C)$, the relation $\preccurlyeq_{C}$ reduces to containment. For any $A, B
\in \G(C)$, 
$$A \preccurlyeq_{C} B \iff A \subseteq B.$$
As shown in \cite{Ham09}, the pair $(\G(C), \supseteq)$ is a complete
lattice, meaning that $\supseteq$ yields a partial order on $\G(C)$,
and that each subset of $\G(C)$ has an infimum and supremmum with respect to
$\supseteq$ in $\G(C)$. Given $\emptyset \neq \A \subseteq \G(C)$, the
infimal and supremal elements in $\G(C)$ are
\begin{equation}
\label{def:infsup}
\inf_{(\G(C), \supseteq)} \A = \cl \conv \bigcup_{A \in \A} A, \; \; \;
\sup_{(\G(C), \supseteq)} \A = \bigcap_{A \in \A} A.
\end{equation}

In order to preserve intuition, 
it is useful to recall how this framework relates to the familiar
complete lattice of the extended real numbers $\R \cup \{ \pm \infty \}$ with
the $\leq$ order. The extended real-numbers translate into the
set-valued framework described above by using the ordering cone $C =
\R_{+}$ and identifying each point $z \in \R$ with the set $\{z\}+
\R_{+}$ in $(\G(\R, \R_{+}), \supseteq)$. Moreover, $+ \infty$ and $-
\infty$ in the usual framework are replaced by $\emptyset$ and $\R$,
repectively, in the set-valued case.

Next, assume that $X, Y$ are two locally convex spaces, and that $D
\subseteq Y$ is a convex cone with $0 \in D$. Let $f:X \to \G(C)$ and
$g:X \to \G(D)$ be two set-valued functions. We consider optimization problems
of the form
$$\min_{x \in X} f(x) \text{ subject to } 0 \in g(x).$$
Where the minimum refers to the set-valued ordering previously discussed. In
other words, we want to find the set
$$\inf_{(\G(C), \supseteq)} \{f(x) | x \in X, 0 \in g(x) \} = \cl \conv
\bigcup \{ f(x) | x \in X, 0 \in g(x) \}.$$
This is the minimum of our optimization problem.

Extending the notion of a minimizer to the set-valued case is slightly
more subtle. Given $f: X \to \G(C)$ and $M \subseteq X$,  we denote the
set of all values of $f$ on $M$ by
$$f[M] = \{ f(x) | x \in M \}.$$
The minimal elements of $f[M]$ are defined by
$$\text{Min } f[M] := \{f(x) | f(x) \in f[M] \text{ and } \forall f(y) \in f[M] \text{ with } f(y)
\supseteq f(x), f(y) = f(x) \}.$$
Similarly, an element $\overline{x}$ is a minimizer of $f$ on $M$ if
$f(\overline{x}) \in \text{Min } f[M]$.

In addition to a minimality condition, we also expect a solution to
attain the infimum of a problem. We say that the infimum of a problem
$$\min f(x) \text{ subject to } x \in X$$
is \emph{attained} at a set $\overline{X} \subseteq X$ if
$$\inf_{x \in \overline{X}} f(x) = \inf_{x \in X} f(X).$$
As per the definition of infimimum in \eqref{def:infsup}, this means
that
$$\cl \conv \bigcup_{x \in \overline{X}} f(x) = \cl \conv \bigcup_{x
\in X} f(x).$$
Alternatively, we say that the set $\overline{X}$ is an
\emph{infimizer} of the problem. 
Combining both of these requirements, we arrive at an appropriate
notion of a solution to a set optimization problem. 

\begin{definition}
Given $f: X \to \G(C)$, an infimizer $\overline{X} \subseteq X$ is
called a  \emph{solution} to the problem 
$$\min f(x) \text{ subject to } x \in X$$
if $\overline{X} \subseteq \text{Min } f[X].$
Similarly, we call an infimizer $\overline{X} \subseteq X$ a
\emph{full solution} to the problem if $\overline{X} =
\text{Min } f[X]$.
\end{definition}
In the typical optimization framework of the extended real numbers,
the notion of an infimizer and minimizer coincide because the search 
for infimizers can be reduced to singleton sets. In the
set-optimization setting, this is not the case, which warrants the above
definition. The infimum of a problem is, in general, the closure of
the union of function values, which is not necessarily a function
value itself. Further details and a more in depth review of this
issue can be found in \cite{HaLo14}.

We next review some important convex-analytic type properties for
set-valued functions.
\begin{definition} \label{def:Convex}
A set valued function $f: X \to (\G(C), \supseteq)$ is said to be
\emph{convex} if for every pair $x_1$, $x_2 \in X$ and every $t \in
(0,1)$
$$f(t x_1 + (1-t) x_2) \supseteq t f(x_1) + (1-t) f(x_2).$$
\end{definition}

It is straight-forward to show that convexity of $f$ is equivalent to
convexity of the graph of $f$, where
$$\text{graph } f := \{(x, z) \in X \times Z | z \in f(x) \} \subseteq
X \times Z.$$


We end this section with the following results, found in
\cite{HaLo14}, which use convexity to
simplify the computation of infimums and Minkowski sums.

\begin{proposition}
If $f:X \to \cP(Z)$ is convex and 
$$f(x) = \cl(f(x) + C),$$
then $f(x) \in \G(C)$
\end{proposition}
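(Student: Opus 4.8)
The plan is to deduce the identity $f(x) = \cl\conv(f(x)+C)$ directly from the two hypotheses, by first upgrading the \emph{functional} convexity of $f$ to \emph{set} convexity of each image $f(x)$, and then observing that for a convex set the convex-hull operation built into the definition of $\G(C)$ is redundant.

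First I would specialize the convexity inequality in Definition \ref{def:Convex}. Fix $x \in X$ and $t \in (0,1)$, and apply the definition with $x_1 = x_2 = x$. This yields
$$f(x) \supseteq t f(x) + (1-t) f(x),$$
where the right-hand side is the Minkowski combination $\{\, t a + (1-t) b : a, b \in f(x)\,\}$. Reading this containment elementwise shows that for every $a, b \in f(x)$ the point $t a + (1-t) b$ lies in $f(x)$; since $t \in (0,1)$ was arbitrary, $f(x)$ is a convex subset of $Z$. (If $f(x) = \emptyset$ the claim is vacuous, and $\emptyset \in \G(C)$ trivially since $\cl\conv(\emptyset + C) = \emptyset$.)

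Next I would use that $C$ is a convex cone and that $f(x)$ is now known to be convex, so the Minkowski sum $f(x) + C$ is convex; hence $\conv(f(x)+C) = f(x)+C$. Taking closures and invoking the hypothesis $f(x) = \cl(f(x)+C)$ gives
$$\cl\conv(f(x)+C) = \cl(f(x)+C) = f(x),$$
which is precisely the statement $f(x) \in \G(Z,C)$.

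The argument is short, and the only point that requires a moment's thought is the first step: functional convexity of a set-valued map says nothing a priori about the shape of an individual image, but evaluating the convexity inequality on the diagonal $x_1 = x_2$ is exactly what converts it into set convexity of $f(x)$. Everything after that is the routine fact that closure and Minkowski addition preserve convexity, combined with the defining hypothesis $f(x) = \cl(f(x)+C)$.
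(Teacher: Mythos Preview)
Your proof is correct and follows essentially the same route as the paper's: both arguments specialize the convexity inequality at $x_1 = x_2 = x$ to obtain convexity of the set $f(x)$, then use that the Minkowski sum of convex sets is convex so that $\conv(f(x)+C) = f(x)+C$, from which the hypothesis $f(x) = \cl(f(x)+C)$ immediately yields $f(x) \in \G(C)$. Your treatment is slightly more explicit (you mention the empty case and spell out the final chain of equalities), but the logical skeleton is identical.
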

\begin{proof}
We want to show that for each $x \in X$,
 $f(x) = \cl \conv (f(x) +C)$, given that $f$ is convex and $f(x) =
\cl (f(x) +C)$. It suffices to show that $f(x)+C$ is convex, which
will follow if $f(x)$ is convex because the Minkowksi sum of two
convex sets is convex \cite{ConvAnal}. But $f(x)$ is convex for every
$x$, since for arbitrary $z_1, z_2 \in f(x)$, $t \in (0,1)$, we have
$$t z_1 + (1-t) z_2 \in t f(x) + (1-t) f(x) \subseteq f(tx + (1-t) x)
= f(x)$$
where the last containment comes from the convexity of $f$.
\end{proof} 

\begin{proposition}
If $f:X \to \G(C)$ and $g:X \to \G(D)$ are convex, then
$$\inf_{(\G(C), \subseteq)} \{ f(x) |x \in X, 0 \in g(x) \} = \cl
\bigcup \{ f(x) | x \in X, 0 \in g(x) \},$$
so that the convex hull can be removed from the definition of infimum.
\end{proposition}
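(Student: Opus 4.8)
The plan is to reduce the statement to two elementary convexity observations: that the feasible set is convex, and that, as a consequence, the union of the values $f(x)$ over feasible $x$ is already a convex set, so that the convex-hull operation in the definition \eqref{def:infsup} of the infimum is redundant and only the closure remains.

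First I would show that the feasible set $M := \{x \in X : 0 \in g(x)\}$ is convex. Take $x_1, x_2 \in M$ and $t \in (0,1)$. Since $g$ is convex, $g(t x_1 + (1-t) x_2) \supseteq t g(x_1) + (1-t) g(x_2)$, and the right-hand side contains $t\cdot 0 + (1-t)\cdot 0 = 0$ because $0 \in g(x_1)$ and $0 \in g(x_2)$. Hence $0 \in g(t x_1 + (1-t) x_2)$, i.e.\ $t x_1 + (1-t) x_2 \in M$.

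Next I would show that $U := \bigcup_{x \in M} f(x)$ is convex. Let $z_1, z_2 \in U$, say $z_i \in f(x_i)$ with $x_i \in M$, and let $t \in (0,1)$. Convexity of $f$ gives $f(t x_1 + (1-t) x_2) \supseteq t f(x_1) + (1-t) f(x_2) \ni t z_1 + (1-t) z_2$, and since $t x_1 + (1-t) x_2 \in M$ by the previous step, it follows that $t z_1 + (1-t) z_2 \in U$. Therefore $\conv U = U$, and so
$$\inf_{(\G(C), \supseteq)} \{f(x) \mid x \in X,\, 0 \in g(x)\} = \cl \conv U = \cl U = \cl \bigcup \{f(x) \mid x \in X,\, 0 \in g(x)\},$$
which is the claim. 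If $M = \emptyset$ both sides reduce to the same object under the stated set conventions, so that degenerate case is trivial.

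There is no serious obstacle here; the only points needing a little care are the bookkeeping with the Minkowski-sum and empty-set conventions introduced earlier — in particular checking that the argument is unaffected if some $f(x)$ happens to equal $\emptyset$ — and observing that no appeal to closedness of the individual sets $f(x)$ is required, since convexity of the graph of $f$ already forces $U$ itself, before any closure is taken, to be convex.
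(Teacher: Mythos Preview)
Your proof is correct and follows essentially the same approach as the paper: first show the feasible set $\{x \in X : 0 \in g(x)\}$ is convex via convexity of $g$, then use convexity of $f$ together with this fact to show the union $\bigcup_{x \in M} f(x)$ is already convex, so the convex hull in the definition of infimum is redundant. Your added remarks on the empty-feasible-set case and the empty-value convention are fine refinements but not essential differences.
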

\begin{proof}
We want to show that 
$$\bigcup \{ f(x) | x \in X, 0 \in g(x) \}$$
is convex. We begin by showing that $\{x \in X | 0 \in g(x) \}$ is
convex. If $0$ is contained in both $g(x_1)$ and $g(x_2)$, then for
any $t \in (0,1)$,
$$0 \in t g(x_1) + (1-t) g(x_2) \subseteq g(t x_1 + (1-t) x_2),$$
so that $\{x \in X | 0 \in g(x) \}$ is convex.

 Next, assume that $z_1, z_2 \in \bigcup \{ f(x) | x \in X, 0 \in g(x)
\}$. Then there are $x_1, x_2$ such that $z_1 \in f(x_1)$ and $z_2 \in
f(x_2)$, with $0 \in g(x_1) \cap g(x_2)$. Thus for any $t \in (0,1)$
$$t z_1 + (1-t) z_2 \in t f(x_1) + (1-t) f(x_2) \subseteq f(t x_1 +
(1-t) x_2).$$
Our initial claim gives that $0 \in g(t x_1 + (1-t) x_2)$, so that  
$\bigcup \{ f(x) | x \in X, 0 \in g(x)$ is convex and we have our
result.
\end{proof}



\section{Problem Formulation}

In this section, we explicitly formulate the multi-period utility
maximization problem.

We consider a function $U(x): \mathbb{R}^{d} \to \mathbb{R}^{d}$ which
models the utility of an agent's assets $x$ at the terminal time $T$.
We make the following assumptions on $U$.
\begin{enumerate}
\item $U$ is a vector valued component-wise function
$$U(x) = (u_1(x_1), u_2(x_2), ..., u_d(x_d)), \; \; x = (x_1, x_2,
..., x_d) \in \mathbb{R}^d$$
where each $u_i: \mathbb{R} \to \mathbb{R}$. Note each $u_i$ is
real-valued, as opposed to extended real valued. Thus $U$ is 
defined even in the case of negative wealth.
\item Each $u_i$ is strictly concave, stricly increasing, and
differentiable. 
\item Marginal utility tends to zero when wealth tends to infinity, so
that
$$\lim_{x_i \to \infty} u'_{i}(x_i) = 0.$$
\item $u_i$ satisfies the \emph{Inada condition}, so that the marginal
utility tends to infinity when $x_i$ tends to the infimum of the
 domain of $u_i$. In other words,
$$\lim_{x_i \to - \infty} u_{i}'(x) = \infty.$$
\end{enumerate}

These assumptions are standard in the context of utility maximization
problems \cite{MathArb}.

Let the ordering cone $C = \mathbb{R}_{+}^d$. We define the objective
function $F: L^{p}(\cF_{T}, \R^d) \to \G(C)$ to be the expected utility 
of a random portfolio at terminal time.
$$F(x) =  \mathbb{E}[-U(x)] + \mathbb{R}_{+}^{d}.$$
The expectation is taken with respect to the probability space
$(\Omega, \cF_{T}, \mathbb{P})$.

Note that in the definition of $F$, we have recast the utility
maximization problem into a minimization framework. This is to
establish more consistency with the set-valued optimization 
tools developed in \cite{SetOpt}, \cite{Ham09}, and \cite{HaLo14},
which cast their results in the traditional minimization framework of
convex analysis. Of course, one could consider the maximization form
of the problem without any loss of generality.

The portfolio optimization problem then takes the form
$$\text{minimize }  F(x)$$
subject to the constraint that the portfolio $x$ is the terminal
result of a self-financing portfolio with initial endowment $x_0$. In
other words, we have the problem
\begin{align*}
\label{prob:P} \tag{$\mathcal{P}$}
\text{minimize }& F(x) \\
\text{subject to }& x \in A_{T}(x_0)
\end{align*}

\section{Duality in Set Optimization}

In this section we recall the necessary results from set-valued duality \cite{HaLo14}
which we will use to prove our main result.
\eqref{prob:P}. 

Set-valued Lagrange duality follows a similar theme to the real-valued
case. Given convex cones $C \subseteq Z$ and $D \subseteq Y$, and convex functions 
$f:X \to \cG(C) \subseteq \mathcal{P}(Z)$ and 
$g: X \to \cG(D) \subseteq \mathcal{P}(Y)$, we are interested in the primal problem
\begin{align*}
\label{prob:P_e} \tag{$\mathcal{P}_{e}$}
\text{minimize } f(x) \text{ subject to } 0 \in g(x).
\end{align*}
i.e. we search for a set $\bar p \subseteq Z$ where
$$\overline{p} := \inf_{(\cG(C), \supseteq)} \{f(x) | x \in X, 0 \in g(x) \} = \cl
\conv \bigcup \{f(x) | x \in X, 0 \in g(x) \}.$$
The first step is to define a set-valued Lagrangian function which recovers the
 objective, in the sense that the function $f(x)$ is the supremum of
the Lagrangian over the set of dual variables. 

For $y^{*} \in Y^{*}$ and $z^{*} \in Z^{*}$, where $Y^{*}$ and $Z^{*}$
denote the topological dual spaces of $Y$ and $Z$, respectively, define the set-valued
function $S_{(Y^{*}, Z^{*})}: Y \to \mathcal{P}(Z)$ by
$$S_{(y^{*}, z^{*})}(y) = \{ z \in Z | \; y^{*}(y) \leq z^{*}(z)\}
.$$
We use these functions to formulate a Lagrangian function.
\begin{definition} \label{def:lag}
We define the Lagrangian $l:X \times Y^{*} \times C^{+} \setminus \{ 0
\} \to \mathcal{G}(C)$ of the problem \eqref{prob:P_e} by
\begin{align*}
l(x, y^{*}, z^{*}) = f(x) \oplus \bigcup_{y \in g(x)} S_{(y^{*},
z^{*})}(y) = f(x) \oplus \inf \{S_{(y^{*}, z^{*})}(y) | y \in g(x)
\}.
\end{align*}
\end{definition}

We can recover the primal objective from the Lagrangian.
\begin{theorem}{\cite{HaLo14}[Prop 2.1]} \label{theor:PrLag}
If $f(x) \neq Z$ for each $x \in X$, then
$$\sup_{(y^{*}, z^{*}) \in Y^{*} \times C^{+} \setminus \{0 \}}
l(x,y^{*}, z^{*}) = \bigcap_{(y^{*}, z^{*}) \in Y^{*} \times C^{+}
\setminus \{ 0 \}} l(x, y^{*}, z^{*}) = \left\{ \begin{array}{cc} f(x)
: & 0 \in g(x) \\ \emptyset: & \text{ otherwise} \end{array} \right.$$
\end{theorem}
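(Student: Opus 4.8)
The plan is to prove the two claimed equalities separately, working inside the complete lattice $(\G(C), \supseteq)$. Recall from \eqref{def:infsup} that the supremum of a family in this lattice is its intersection, so the first equality $\sup_{(y^*,z^*)} l(x,y^*,z^*) = \bigcap_{(y^*,z^*)} l(x,y^*,z^*)$ is immediate once we know that each $l(x,y^*,z^*)$ genuinely lies in $\G(C)$; this in turn follows from the definition of $l$ as $f(x) \oplus \inf\{S_{(y^*,z^*)}(y) \mid y \in g(x)\}$, since $f(x) \in \G(C)$ by hypothesis, each halfspace $S_{(y^*,z^*)}(y)$ is closed convex and stable under $+C$ (because $z^* \in C^+$), the infimal element is again in $\G(C)$, and $\oplus$ maps $\G(C) \times \G(C)$ into $\G(C)$. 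So the real content is the second equality: evaluating $\bigcap_{(y^*,z^*) \in Y^* \times C^+\setminus\{0\}} l(x,y^*,z^*)$.

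First I would treat the case $0 \in g(x)$. When $0 \in g(x)$, one of the sets over which we take the union in $\bigcup_{y \in g(x)} S_{(y^*,z^*)}(y)$ is $S_{(y^*,z^*)}(0) = \{z \in Z \mid 0 \le z^*(z)\}$, a halfspace through the origin. The key observation is that $f(x) \oplus S_{(y^*,z^*)}(0) \supseteq f(x)$, so $l(x,y^*,z^*) \supseteq f(x)$ for every admissible $(y^*,z^*)$, giving the inclusion $\bigcap_{(y^*,z^*)} l(x,y^*,z^*) \supseteq f(x)$ — wait, the ordering is reversed: $\supseteq$ means "$\le$", so I need $\bigcap l \subseteq f(x)$ as sets translating to $\bigcap l \ge f(x)$, which is automatic since each $l \supseteq f(x)$ means each $l$ contains $f(x)$, hence so does the intersection; thus $\bigcap l \supseteq f(x)$ as sets. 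For the reverse set inclusion $\bigcap l \subseteq f(x)$, I would use a separation argument: if $\bar z \notin f(x)$, then since $f(x) = \cl\conv(f(x)+C)$ is a closed convex set, there is a continuous linear functional $z^* \in Z^*$ strictly separating $\bar z$ from $f(x)$, and because $f(x) + C \subseteq f(x)$ one gets $z^* \in C^+$ and (after rescaling) $z^*$ nonzero; taking $y^* = 0$ and this $z^*$, one checks $\bar z \notin l(x,0,z^*)$, so $\bar z \notin \bigcap l$. This closes the case $0 \in g(x)$.

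Next I would handle the case $0 \notin g(x)$, where the claim is $\bigcap_{(y^*,z^*)} l(x,y^*,z^*) = \emptyset$. Here I want to produce a single $(y^*,z^*)$ making $l(x,y^*,z^*) = \emptyset$, or a sequence whose intersection is empty. Since $g(x) \in \G(D)$ is closed convex and $0 \notin g(x)$, separate $0$ from $g(x)$ by some $y^* \in Y^*$: we may arrange $y^*(y) \ge \alpha > 0$ for all $y \in g(x)$ (using that $g(x)$ is stable under $+D$ to get the correct sign on the cone, hence a uniform positive gap after noting $0$ is strictly separated). Then for any fixed nonzero $z^* \in C^+$ and any scalar $\lambda > 0$, replacing $y^*$ by $\lambda y^*$ pushes $\bigcup_{y \in g(x)} S_{(\lambda y^*, z^*)}(y) = \{z \mid z^*(z) \ge \lambda\alpha\}$ arbitrarily far out; taking closure and intersecting over $\lambda \to \infty$ (equivalently, observing $f(x) \oplus \{z^* \ge \lambda\alpha\}$ shrinks to $\emptyset$ as $\lambda \to \infty$ provided $f(x) \ne Z$, which holds by hypothesis) yields $\bigcap_{\lambda} l(x,\lambda y^*, z^*) = \emptyset$, a fortiori the full intersection is empty. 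The main obstacle I anticipate is the bookkeeping in this last step: verifying that $f(x) \oplus \{z \mid z^*(z) \ge \lambda\alpha\}$ really does become empty (not just "recede to infinity") as $\lambda \to \infty$ requires using $f(x) \ne Z$ together with the fact that $z^* \in C^+$ is compatible with the recession directions of $f(x)$, i.e. that $z^*$ is bounded below on $f(x)$ — this is exactly where the hypothesis $f(x) \neq Z$ is doing its work, and I would make sure to invoke it carefully rather than wave at it.
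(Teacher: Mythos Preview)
The paper does not supply its own proof of this theorem; it is stated as a quotation of \cite[Prop.~2.1]{HaLo14} and used as a black box. So there is no in-paper argument to compare your proposal against.

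Your plan is essentially correct as a proof sketch. Two small points are worth tightening. First, in the case $0\notin g(x)$ you write ``for any fixed nonzero $z^*\in C^+$'', but an arbitrary $z^*\in C^+\setminus\{0\}$ need not be bounded below on $f(x)$, so $f(x)\oplus\{z:z^*(z)\ge\lambda\alpha\}$ could equal $Z$ for every $\lambda$. You already flag this as the obstacle; the resolution is simply to \emph{choose} $z^*$ first: since $f(x)\neq Z$ is closed convex, pick $\bar z\notin f(x)$ and separate to obtain a nonzero $z^*$ with $\inf_{w\in f(x)} z^*(w)>z^*(\bar z)>-\infty$; as in your $0\in g(x)$ argument, invariance of $f(x)$ under $+C$ forces this $z^*$ into $C^+$. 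With this particular $z^*$ and your separating $y^*$ for $g(x)$, the sets $l(x,\lambda y^*,z^*)$ are contained in half-spaces $\{z:z^*(z)\ge \lambda\alpha+m\}$ with $m>-\infty$, and the intersection over $\lambda>0$ is empty. Second, the aside about using $g(x)+D\subseteq g(x)$ to secure the strict gap is unnecessary: Hahn--Banach separation of the point $0$ from the closed convex set $g(x)$ already yields $y^*$ and $\alpha>0$ with $y^*(y)\ge\alpha$ on $g(x)$.
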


Under the condition in Theorem \ref{theor:PrLag}, we can formulate
the problem \eqref{prob:P_e} as
$$\inf_{x \in X} \sup_{(y^{*}, z^{*}) \in Y^{*} \times C^{+} \setminus
\{ 0 \}} l(x,y^{*}, z^{*}).$$
We define the dual problem
\begin{align*} \label{prob:D_e} \tag{$\mathcal{D}_{e}$}
\sup_{(y^{*}, z^{*}) \in Y^{*} \times C^{+} \setminus
\{ 0 \}} \inf_{x \in X} l(x,y^{*}, z^{*}).
\end{align*}
We denote by $h:Y^{*} \times C^{+} \setminus \{0\} \to G(C)$ the dual
objective
\begin{equation} \label{duallag}
h(y^{*}, z^{*}) := \inf_{x \in X} l(x, y^{*}, z^{*})
\end{equation}
and define $\overline{d}$ to be the set 
$$\overline{d}:= \sup_{(y^{*}, z^{*}) \in Y^{*} \times C^{+} \setminus \{ 0 \}} h(y^{*}, z^{*}).$$

We have the following weak duality results for the problems \eqref{prob:P_e} and
\eqref{prob:D_e}.
\begin{proposition}{\cite[Prop 6.2]{HaLo14}} \label{prop:WeakDual} 
Weak duality always holds for the problems \eqref{prob:P_e} and
\eqref{prob:D_e}. That is,
$$\overline{d} = \sup \left \{ h(y^{*}, z^{*}) | y^{*} \in Y^{*}, z^{*} \in
C^{+} \setminus \{0 \} \right \} \supseteq \inf \left \{ f(x) | x \in
X, 0 \in g(x) \right \} = \overline{p}.$$
\end{proposition}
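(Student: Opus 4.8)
The plan is to establish weak duality directly from the definition of the Lagrangian and the recovery result in Theorem~\ref{theor:PrLag}, without invoking any minimax theorem. The key observation is that for set-valued functions ordered by reverse inclusion, the inequality $\sup_{a}\inf_{b} \le \inf_{b}\sup_{a}$ (here meaning the left side is $\supseteq$ the right side) is a purely order-theoretic fact in any complete lattice, and the complete lattice structure of $(\cG(C),\supseteq)$ is already available from \cite{Ham09} as quoted in the preliminaries.

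First I would fix an arbitrary feasible point $x_0 \in X$, i.e.\ one with $0 \in g(x_0)$, and an arbitrary dual pair $(y^{*}, z^{*}) \in Y^{*} \times C^{+}\setminus\{0\}$. By definition of $h$ in \eqref{duallag} we have $h(y^{*},z^{*}) = \inf_{x\in X} l(x,y^{*},z^{*}) \supseteq l(x_0, y^{*}, z^{*})$, since the infimum over all $x$ (a closed convex hull of a union, hence a superset of each member). Next, since $x_0$ is feasible, Theorem~\ref{theor:PrLag} gives $\bigcap_{(y^{*},z^{*})} l(x_0,y^{*},z^{*}) = f(x_0)$, and in particular the single term $l(x_0,y^{*},z^{*}) \supseteq f(x_0)$. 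Chaining these, $h(y^{*},z^{*}) \supseteq f(x_0)$ for every feasible $x_0$ and every dual pair.

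Then I would take suprema. Holding $(y^{*},z^{*})$ fixed, $h(y^{*},z^{*}) \supseteq f(x_0)$ for all feasible $x_0$ means $h(y^{*},z^{*})$ is an upper bound (in the $\supseteq$ order, i.e.\ it is contained in each $f(x_0)$... careful: $\supseteq$ is the order, so the supremum is the intersection) — concretely, $h(y^{*},z^{*}) \supseteq \cl\conv\bigcup\{f(x_0): 0\in g(x_0)\} = \overline p$, using the explicit formula \eqref{def:infsup} for the infimum as closed convex hull of the union, since $h(y^{*},z^{*})$ is a closed convex set (it lies in $\cG(C)$) containing every $f(x_0)$. Finally, taking the supremum over all $(y^{*},z^{*})$, which by \eqref{def:infsup} is $\overline d = \bigcap_{(y^{*},z^{*})} h(y^{*},z^{*})$: each intersected set contains $\overline p$, so the intersection contains $\overline p$, giving $\overline d \supseteq \overline p$ as claimed.

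I do not expect a serious obstacle here — this is the ``easy half'' of Lagrangian duality and the argument is essentially a diagram chase through the order relations. The one point requiring a little care is bookkeeping the direction of the inequalities: because the lattice order is reverse inclusion, ``$\inf$'' is $\cl\conv\bigcup$ and ``$\sup$'' is $\bigcap$, so several steps that read like ``$\ge$'' are actually ``$\supseteq$'' and several that read like ``$\le$'' are ``$\subseteq$''; I would state each containment explicitly rather than rely on the $\le$/$\ge$ shorthand. A second minor point is ensuring $h(y^{*},z^{*}) \in \cG(C)$ (so that it genuinely is the closed convex set needed for the $\overline p \subseteq h$ step), which follows since $l(x,\cdot)$ takes values in $\cG(C)$ and infima in the complete lattice $(\cG(C),\supseteq)$ stay in $\cG(C)$; this is part of the cited lattice structure. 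Since the result is quoted as \cite[Prop 6.2]{HaLo14}, one could alternatively just cite it, but the direct proof above is short enough to include.
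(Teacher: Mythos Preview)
The paper does not prove this proposition; it simply cites \cite[Prop 6.2]{HaLo14} and moves on. Your direct argument is correct and would serve as a self-contained proof if one wished to include it.

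One small refinement: you invoke Theorem~\ref{theor:PrLag} to obtain $l(x_0,y^{*},z^{*}) \supseteq f(x_0)$ for feasible $x_0$, but that theorem carries the hypothesis $f(x)\neq Z$ for all $x$, which the proposition does not assume. You can sidestep this entirely: since $0\in g(x_0)$, the set $\bigcup_{y\in g(x_0)} S_{(y^{*},z^{*})}(y)$ contains $S_{(y^{*},z^{*})}(0)=\{z:z^{*}(z)\ge 0\}\ni 0$, so directly from Definition~\ref{def:lag} one has $l(x_0,y^{*},z^{*})=\cl\bigl(f(x_0)+\bigcup_{y\in g(x_0)} S_{(y^{*},z^{*})}(y)\bigr)\supseteq f(x_0)+0=f(x_0)$. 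With that adjustment the chain $h(y^{*},z^{*})\supseteq l(x_0,y^{*},z^{*})\supseteq f(x_0)$ holds unconditionally, and the rest of your argument (passing to $\overline p$ via closed convex hull, then intersecting over dual pairs) goes through exactly as you wrote.
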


Strong duality, on the other hand, requires a constraint
qualification. The problem \eqref{prob:P_e} is said to satisfy the
\emph{Slater condition} if
$$\exists \overline{x} \in \text{ dom } f: \; \; g(\overline{x}) \cap
\setint(-D) \neq \emptyset.$$
Slater's condition is sufficient for strong dualilty between \eqref{prob:P_e} and
\eqref{prob:D_e}.
\begin{theorem}{\cite[Theorem 6.1]{HaLo14}} \label{theor:StrDual}
Assume $p \neq Z$. If $f:X \to \cG(C)$ and $g: X \to
\cG(D)$ are convex and the Slater condition for problem \eqref{prob:P_e}
is satisfied, then strong duality holds for \eqref{prob:P_e}. That is,
$$\overline{p} = \inf \left\{f(x) | 0 \in g(x) \right\} = \sup \left\{ h(y^{*}, z^{*}) | y^{*} \in
Y^{*}, z^{*} \in C^{+} \setminus \{ 0 \} \right\}= \overline{d}.$$
\end{theorem}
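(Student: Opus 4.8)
The plan is to prove the theorem by \emph{scalarization}: reduce the set-valued strong duality to a family of classical scalar Lagrangian strong dualities, one for each direction $z^* \in C^+\setminus\{0\}$, and then reassemble them using the half-space representation of elements of $\cG(C)$. Since weak duality $\overline d \supseteq \overline p$ is already available from Proposition \ref{prop:WeakDual}, and since $\overline p \neq Z$ by hypothesis, it suffices to prove the reverse inclusion $\overline d \subseteq \overline p$. I would use the fact that every $A \in \cG(Z,C)$ with $A\neq Z$ admits the representation $A = \bigcap_{z^*\in C^+\setminus\{0\}} \{z\in Z : z^*(z)\geq \varphi_A(z^*)\}$, where $\varphi_A(z^*):=\inf_{a\in A} z^*(a)$ is the scalarization of $A$ in direction $z^*$ (only $z^*\in C^+$ give finite values because $A=A+C$). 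Applying this to $\overline p$, the inclusion $\overline d\subseteq\overline p$ will follow once I show that, for every fixed $z^*\in C^+\setminus\{0\}$,
$$\overline d \subseteq \{z : z^*(z)\geq \varphi_{\overline p}(z^*)\}.$$

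Next I would compute the relevant scalarizations explicitly. Because the infimum in $\cG(C)$ is $\cl\conv$ of a union (see \eqref{def:infsup}) and a continuous linear functional is insensitive to closure and convex hull, one gets $\varphi_{\overline p}(z^*)=\inf\{\varphi_{f(x)}(z^*): x\in X,\, 0\in g(x)\}=:p_{z^*}$, the value of the scalar primal problem. For the Lagrangian, note that $\bigcup_{y\in g(x)} S_{(y^*,z^*)}(y)$ collapses to the single half-space $\{z: z^*(z)\geq \varphi_{g(x)}(y^*)\}$, so that $\varphi_{l(x,y^*,z^*)}(z^*)=\varphi_{f(x)}(z^*)+\varphi_{g(x)}(y^*)$; taking the infimum over $x$ yields $\varphi_{h(y^*,z^*)}(z^*)=\inf_{x\in X}[\varphi_{f(x)}(z^*)+\varphi_{g(x)}(y^*)]=:d_{z^*}(y^*)$, which is precisely the scalar Lagrangian dual function, with multiplier $y^*$, of the scalar program defining $p_{z^*}$. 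In particular $h(y^*,z^*)\subseteq\{z:z^*(z)\geq \varphi_{\overline p}(z^*)\}$ holds if and only if $d_{z^*}(y^*)\geq p_{z^*}$.

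I would then invoke classical scalar Lagrangian strong duality for each fixed $z^*$. Convexity of $f$ makes $x\mapsto\varphi_{f(x)}(z^*)$ convex, convexity of $g$ makes $\{x:0\in g(x)\}$ convex, and the set-valued Slater condition $g(\bar x)\cap\setint(-D)\neq\emptyset$ translates into strict feasibility of the scalar cone-constrained program; this guarantees the existence of a multiplier $y^*=y^*(z^*)\in D^+$ with $d_{z^*}(y^*)=p_{z^*}$, the degenerate cases $p_{z^*}=\pm\infty$ being handled separately (the $-\infty$ case making the half-space all of $Z$, the $+\infty$ case excluded since Slater forces feasibility). For this $y^*$ we then have $h(y^*,z^*)\subseteq\{z:z^*(z)\geq\varphi_{\overline p}(z^*)\}$, and since $\overline d=\bigcap_{(y'^*,z'^*)} h(y'^*,z'^*)\subseteq h(y^*,z^*)$, the desired inclusion holds for this $z^*$. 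Intersecting over all $z^*\in C^+\setminus\{0\}$ gives $\overline d\subseteq\overline p$, which with weak duality yields $\overline p=\overline d$.

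The main obstacle I expect is the third step: rigorously translating the set-valued Slater condition into the scalar regularity hypothesis that produces a \emph{bounded} dual multiplier $y^*\in D^+$, and invoking a form of scalar Lagrange duality valid in the possibly infinite-dimensional locally convex setting, which is where a Hahn-Banach/separation argument enters and where one must verify that the multiplier indeed lies in $D^+$. It is worth noting that each $z^*$ is treated independently, so no joint or measurable selection of $y^*(z^*)$ across directions is required, which simplifies matters considerably. A secondary technical point is justifying the scalarization identities of the second step, in particular that closures and convex hulls never alter the support values $\varphi$, and confirming the half-space representation of $\overline p$ used at the outset.
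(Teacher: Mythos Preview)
The paper does not prove this theorem at all: it is quoted verbatim as \cite[Theorem 6.1]{HaLo14} and used as a black box in the proof of Theorem~\ref{theor:StrongDual}. There is therefore nothing in the paper to compare your argument against.

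That said, your scalarization strategy is the standard route to results of this type and is essentially how the cited source \cite{HaLo14} proceeds. The reduction you describe is correct: the identities $\varphi_{\overline p}(z^*)=\inf\{\varphi_{f(x)}(z^*):0\in g(x)\}$ and $\varphi_{l(x,y^*,z^*)}(z^*)=\varphi_{f(x)}(z^*)+\varphi_{g(x)}(y^*)$ follow from the lattice structure of $\cG(C)$ and linearity of $z^*$, and the half-space representation of $\overline p\neq Z$ is exactly what the complete-lattice description of $\cG(C)$ provides. The only place that requires real work, as you correctly flag, is the scalar strong-duality step: for fixed $z^*$ you need a multiplier $y^*\in D^+$ with $\inf_x[\varphi_{f(x)}(z^*)+\varphi_{g(x)}(y^*)]=p_{z^*}$. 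Here the constraint $0\in g(x)$ is set-valued rather than of the classical form $G(x)\in -D$, so you cannot literally invoke an off-the-shelf scalar Lagrange duality theorem; one must run the separation argument directly on the convex set $\{(\varphi_{f(x)}(z^*)+r,\,y):x\in X,\ r\ge 0,\ y\in g(x)\}\subseteq \R\times Y$, using the Slater point to produce a supporting hyperplane whose $Y$-component lies in $D^+$. This is routine but should be stated explicitly rather than waved at. Your observation that $y^*$ may depend on $z^*$ without any selection issue is spot on, since $\overline d$ is an intersection over all pairs $(y^*,z^*)$.
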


Lastly, we introduce the notion of a set-valued Fenchel conjugate.
\begin{definition}\label{def:dualf}
The (negative) Fenchel conjugate of a function $f: X \to
\mathcal{P}(Z)$ is the function $-f^{*}: X^{*} \times (C^{+} \setminus
\{0 \}) \to \mathcal{G}(C)$ defined by
$$- f^{*}(x^{*}, z^{*}) = \cl \bigcup_{x \in X} \left[f(x) + S_{(x^{*},
z^{*})}(-x) \right].$$
\end{definition}
Motivation for this definition and further details about the nature of
the set-valued Fenchel conjugate can be found in \cite{Ham09}.

\section{Duality in Portfolio Optimization}

In this section we apply the tools introduced in the previous section 
for dualizing set-valued optimization problems to the portfolio optimization
problem \eqref{prob:P}.

We begin by showing that \eqref{prob:P} is well-defined.
\begin{lemma}\label{lem:WellDef}
 The functions $F: L^{P} (\cF_{T}, \mathbb{R}^d) \to
\cG(\mathbb{R}^{d}, \mathbb{R}_{+}^{d})$, $F(x) = \mathbb{E}[-U(x)] +
\mathbb{R}^{d}_{+}$ and $g: L^{P} (\cF_{T},
\mathbb{R}^d) \to \cG(L^p(\cF_T,\R^d), L^p(\cF_T,\R^d_+))$, $g(x) = x
- A_{T}(x_0)$ are well-defined and convex.
\end{lemma}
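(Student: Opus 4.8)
The plan is to verify two things for each of the functions $F$ and $g$: first, that the function indeed takes values in the claimed space $\G(\cdot,\cdot)$ (the ``well-defined'' part), and second, that it is convex in the sense of Definition~\ref{def:Convex}. For both functions I would lean on the structural description: each has the form $x \mapsto \phi(x) + C$ where $\phi$ is an (ordinary, single-valued or affine-set-valued) convex map into the underlying vector space and $C$ is the ordering cone. This reduces both assertions to elementary facts about Minkowski sums of convex sets and closedness of finitely generated cones, both of which are available in the finite-$\Omega$ setting established in the Preliminaries.

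For $F(x) = \E[-U(x)] + \R_+^d$: since $U$ is component-wise strictly concave and finite-valued (assumption (1)--(2) on $U$), each $-u_i$ is convex and finite on $\R$, so $x \mapsto \E[-U(x)]$ is a convex function $L^p(\cF_T,\R^d) \to \R^d$ (expectation is linear, and convexity is preserved under composition with the linear map $x \mapsto (x_i(\omega))$ and averaging with positive weights $p_n$). Its graph is therefore convex, and adding the constant cone $\R_+^d$ preserves convexity of the graph; by the remark following Definition~\ref{def:Convex} this gives convexity of $F$ as a set-valued map. To see $F(x) \in \G(\R^d,\R_+^d)$, note $F(x) = \{\E[-U(x)]\} + \R_+^d$ is a translate of a closed convex cone, hence closed and convex, and visibly satisfies $F(x) = \cl\conv(F(x)+\R_+^d)$ since $\R_+^d + \R_+^d = \R_+^d$. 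Invoking the Proposition stating that a convex $f$ with $f(x) = \cl(f(x)+C)$ lands in $\G(C)$ finishes this case cleanly.

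For $g(x) = x - A_T(x_0) = x - x_0\mathbbm{1} - A_T$: using the Proposition expressing $A_T$ as $-K_0\mathbbm{1} - L^p(\cF_1;K_1) - \cdots - L^p(\cF_T;K_T)$, we get $g(x) = x - x_0\mathbbm{1} + K_0\mathbbm{1} + L^p(\cF_1;K_1) + \cdots + L^p(\cF_T;K_T)$, i.e. a translate of the cone $-A_T$. Because $\Omega$ is finite, this is a finitely generated convex cone and hence closed (this is exactly the closedness argument used in the proof of Lemma~\ref{lem:KhoaLemma}); moreover it contains $L^p(\cF_T,\R^d_+)$ as a sub-cone since each $e^i \in K_t$, so $g(x) = \cl(g(x) + L^p(\cF_T,\R^d_+))$. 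Convexity of $g$ as a set-valued map is immediate: $g(x) = \psi(x) + (-A_T)$ with $\psi$ affine, so $\text{graph }g$ is an affine image of $X \times (-A_T)$ translated, hence convex; equivalently one checks directly that $g(tx_1+(1-t)x_2) = tg(x_1) + (1-t)g(x_2) \supseteq tg(x_1)+(1-t)g(x_2)$ since the right side is contained in (in fact equals) the left by the cone property. Again the cited Proposition yields $g(x) \in \G(L^p(\cF_T,\R^d), L^p(\cF_T,\R^d_+))$.

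The only genuine subtlety — and the step I expect to need the most care — is the closedness claim for $g$, i.e. that $A_T$ (equivalently $-A_T$) is closed in $L^p(\cF_T,\R^d)$. This is not automatic for a sum of closed cones in general, but it holds here precisely because of Assumption~\ref{assumptions}: with $\Omega$ finite, each $L^p(\cF_t;K_t)$ is a \emph{finitely generated} cone in $\R^{d\times N}$ (generated by the $K_t$-generators indexed against the atoms of $\cF_t$), the finite sum of finitely generated cones is finitely generated, and finitely generated cones are polyhedral and therefore closed. I would state this as the key lemma-within-the-proof, citing the same reasoning already deployed in Lemma~\ref{lem:KhoaLemma}. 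Everything else — convexity of $\E[-U(\cdot)]$, stability of the $\G$-property under cone-translation — is routine and can be dispatched in a line or two each.
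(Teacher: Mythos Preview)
Your proposal is correct and follows essentially the same approach as the paper: for $F$ you use componentwise convexity of $-u_i$ together with the fact that a point plus $\R^d_+$ is closed and convex, and for $g$ you use exactly the paper's argument that finiteness of $\Omega$ makes each $L^p(\cF_t;K_t)$ a finitely generated (hence polyhedral, hence closed) cone whose sum is again polyhedral, together with the absorption $L^p(\cF_T,\R^d_+)\subseteq L^p(\cF_T,K_T)$ since $e^i\in K_T$. If anything, your write-up is slightly more complete, since you explicitly verify the set-valued convexity of $g$ (via the affine-plus-cone structure), which the paper's proof leaves implicit.
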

\begin{proof}
We begin with the function $F$. $F$ clearly maps to
$\cG(\mathbb{R}^{d}, \mathbb{R}_{+}^{d})$ because
$$F(x) + \R^{d}_{+} = \mathbb{E}[-U(x)] + \R^{d}_{+}$$
is a polyhedral convex cone, and hence is a closed convex cone.
We claim that $F$ is also a convex map. More precisely, let $x_1, x_2 \in \R^d$, and $t \in (0,1)$. Then
\begin{align*}
& t f(x_1) + (1-t) f(x_2) \\
&= t \left( \mathbb{E}[-U(x_1)] + \R_{d}^{+}
\right) + (1-t) \left( \mathbb{E}[-U(x_2)] + \mathbb{R}_{d}^{+}
\right)\\ 
&= \mathbb{E} \left[ t \left(-U(x_1) \right) + (1-t)
\left(-U(x_2) \right) \right] + \mathbb{R}_{d}^{+}.
\end{align*}
By the assumptions on our objective function $U$, for each $1 \leq i
\leq d$, $-u_i$ is convex, so that
$$t \left(-u_i(x_1(\omega)) \right) +
(1-t)\left(-u_i(x_2(\omega))\right) \geq -u_i \left(t
x_1(\omega) + (1-t)x_2(\omega) \right)$$
for each $\omega \in \Omega$. It follows that
$$\mathbb{E} \left[ t \left(-U(x_1) \right) + (1-t)
\left(-U(x_2) \right) \right] + \mathbb{R}_{d}^{+} \subseteq
\mathbb{E}\left[-U(t x_1 + (1-t) x_2) \right] + \mathbb{R}_{d}^{+}.$$
We conclude that $F$ is convex by Definition \ref{def:Convex}.

Next we consider the function $g(x)$. 
We need to show that in
$L^p(\cF_T,\R^d)$, 
\begin{equation*}
\cl\conv(x-A_T(x_0)+L^p(\cF_T,\R^d_+))=x-A_T(x_0)\,.  
\end{equation*}
Observe that $-A_T(x_0)$ and $L^p(\cF_T,\R^d_+)$ are convex, so their
sum is as well \cite[Ch. 3]{ConvAnal} and the convex
hull on the left side can be dropped. In addition, since $K_T$ is a
solvency cone, the cone $L^p(\cF_T,\R^d_+)$ is contained in
$L^p(\cF_T,K_T)$, thus $x-A_T(x_0)+L^p(\cF_T,\R^d_+)=x-A_T(x_0)$.
Hence,
it remains to show that $x-A_T(x_0)$ is closed in $L^p(\cF_T,\R^d)$.
By the assumptions \eqref{assumptions}, $\Omega$ is finite, and each 
element in $\Omega= \omega_1,\dots, \omega_N$ has positive
probability. The space of $d$-dimensional random variables can then be
associated with Euclidean space of dimension $d \times N$ and inner
product
$\mathbb{E} \langle x, y \rangle$. Note that if $G =
\textit{cone}(\xi_1,...,\xi_m)$
is a random convex cone generated by $m$ $\mathcal{F}_t$-measurable random
variables, then $L^p(G, \mathcal{F}_t)$ is the polyhedral convex cone
generated by
$\xi_i I_{\Gamma_{j}}$ where $\{\Gamma_j\}_{j \in J}$ are the atoms of
$\mathcal{F}_t$.
We have established that each of the $L^p (\mathcal{F}_t, K_t)$ is
finitely
generated, so by the Farkas-Minkowski-Weyl Theorem, each is
polyhedral.
Since the finite sum of polyhedral cones is a polyhedral cone, we
conclude
that $x-A_T(x_0)$ is a polyhedral cone, and hence is closed.
\end{proof}

Note that, in the notation of the previous section, we have
established that $X= L^p(\cF_{T}, \R^d)$, $Y= L^p(\cF_{T}, \R^d)$,
$Z=\R^d$, $C=\R^{d}_{+}$, and $D=L^p(\cF_T, \R^d_+)$.

We are now ready to state our main result, which is a formulation
of the dual problem to the portfolio optimization problem \eqref{prob:P}. 
We then examine the relationship between the primal and dual problems.
Namely, we establish that strong duality holds.

\begin{theorem} \label{thm:DualProb}
The dual problem to \eqref{prob:P} is the problem
\begin{align*} 
\label{prob:D} \tag{$\mathcal{D}$}
\sup_{(y^{*}, z^{*}) \in -A_T(x_0)^{+} \times \R_{+}^{d} \setminus
\{0\}} h(y^{*}, z^{*})
\end{align*}
where $h: L^{q}(\cF_{T}, \R^{d}) \times \R^{d}_{+} \setminus \{0 \}
\to \cG(\R^{d}_{+})$ is defined as
\begin{equation} \label{dualobj}
	h(y^*,z^*)= \left\{ 
		\begin{array}{ll} 
			\{z\in\R^d | \inf_{x \in L^{p}(\cF_T, \R^d)} z^{*}(\E[-U(x)]) +y^{*}(x) \leq z^{*}(z) \} & \text{ if } y^* \in(-A_T(x_0))^+ 
			\\ \R^d&\text{ otherwise}.  
		\end{array} 
		\right.
\end{equation}
When $y^{*} \in -A_{T}(x_{0})^{+}$ and no components of $z^{*}$ are zero, 
we can write the function $h(y^{*}, z^{*})$ as
\begin{equation}
\left\{ z \in \R^d | \mathbb{E} \left[ \sum_{i=1}^{d} z_{i}^{*} \mathbbm{1} 
u_{i}^{*}\left(\frac{y_{i}^{*}}{z_{i}^{*} \mathbbm{1}} \right) \right] \leq z^{*} z \right\}
\end{equation}
where $u_{i}^{*}$ denotes the concave conjugate of $u_i$ \cite{ConvAnal}.
\end{theorem}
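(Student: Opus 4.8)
The plan is to apply the abstract Lagrange-duality machinery of Section 4 (Definition \ref{def:lag}, Theorem \ref{theor:PrLag}, and Theorem \ref{theor:StrDual}) to the concrete primal problem \eqref{prob:P}, which by Lemma \ref{lem:WellDef} is an instance of \eqref{prob:P_e} with the identifications $X=Y=L^p(\cF_T,\R^d)$, $Z=\R^d$, $C=\R^d_+$, $D=L^p(\cF_T,\R^d_+)$, objective $f=F$, and constraint map $g(x)=x-A_T(x_0)$. The first step is to unwind the dual objective $h(y^*,z^*)=\inf_{x\in X} l(x,y^*,z^*)$ from Definition \ref{def:lag}. Here $y^*\in Y^*=L^q(\cF_T,\R^d)$ and $z^*\in C^+\setminus\{0\}=\R^d_+\setminus\{0\}$. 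Because $g(x)=x-A_T(x_0)$ is a translate of a cone, the term $\bigcup_{y\in g(x)}S_{(y^*,z^*)}(y)$ either equals all of $\R^d$ (when $y^*$ fails to be in the polar $(-A_T(x_0))^+$, since then $y^*(y)$ is unbounded above over the cone direction) or reduces to the half-space $\{z: \inf_{a\in A_T(x_0)}[y^*(x)-y^*(a)]\le z^*(z)\}$; combining this with $f(x)=\E[-U(x)]+\R^d_+$ via $\oplus$ and then taking $\inf_{x\in X}$ (which in $\G(C)$ is $\cl\conv\bigcup$, and here just $\cl\bigcup$ by convexity) yields exactly the half-space description in \eqref{dualobj}. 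The case distinction on whether $y^*\in(-A_T(x_0))^+$ is immediate from this unwinding.

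The second step is the reduction to the closed-form expression when $y^*\in(-A_T(x_0))^+$ and every component $z^*_i>0$. The quantity to evaluate is $\inf_{x\in L^p(\cF_T,\R^d)}\big[z^*(\E[-U(x)])+y^*(x)\big]$. Since $U$ acts componentwise and the inner product $z^*(\E[\,\cdot\,])=\E\langle z^*,\cdot\rangle$ and $y^*(x)=\E\langle y^*,x\rangle$ both decouple over components $i$ and over the finitely many atoms $\omega$, the infimum splits into a finite sum of scalar problems of the form $\inf_{t\in\R}\big[-z^*_i(\omega)p_\omega u_i(t)+y^*_i(\omega)t\big]$ (here I will need to carry along the probabilities $p_\omega$; writing $z^*_i\mathbbm{1}$ absorbs the atom-indexing as in the statement). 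Each scalar problem is, after factoring out $z^*_i(\omega)p_\omega>0$, exactly $-z^*_i(\omega)p_\omega\sup_{t}\big[\tfrac{y^*_i(\omega)}{z^*_i(\omega)p_\omega}\,t - (-u_i(t))\cdot(-1)\big]$ — more cleanly, it equals $z^*_i(\omega)p_\omega\,u_i^*\!\big(y^*_i(\omega)/(z^*_i(\omega)p_\omega)\big)$ where $u_i^*(y)=\inf_t[yt - u_i(t)]$... let me instead use the concave conjugate convention $u_i^*(y)=\inf_{t}\{ty-u_i(t)\}$ advertised in the statement, so that $\inf_t[-z^*_i p_\omega u_i(t)+y^*_i t]=z^*_i p_\omega\, u_i^*(y^*_i/(z^*_i p_\omega))$. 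Summing over $i$ and $\omega$ and repackaging as an expectation gives $\E\big[\sum_{i=1}^d z^*_i\mathbbm{1}\,u_i^*(y^*_i/(z^*_i\mathbbm{1}))\big]$, which is the asserted formula. The finiteness of the conjugate (so the half-space is proper, not all of $\R^d$) uses the Inada condition and the vanishing-marginal-utility assumption on $u_i$, which guarantee $-u_i$ has a nontrivial domain of finiteness for its Legendre transform; this should be remarked but is routine.

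The third step is to identify the dual \emph{problem} as $\sup_{(y^*,z^*)} h(y^*,z^*)$ over $(y^*,z^*)\in Y^*\times(C^+\setminus\{0\})$ and observe that, since $h(y^*,z^*)=\R^d$ (the bottom element, playing the role of $-\infty$) whenever $y^*\notin(-A_T(x_0))^+$, these values are absorbed by the supremum and the effective domain is $(-A_T(x_0))^+\times(\R^d_+\setminus\{0\})$, giving exactly \eqref{prob:D}. Strong duality — that $\bar p=\bar d$ — then follows from Theorem \ref{theor:StrDual} once the Slater condition is verified: we need $\bar x\in\dom F$ with $g(\bar x)\cap\setint(-D)\neq\emptyset$, i.e. $(\bar x-A_T(x_0))\cap\setint(-L^p(\cF_T,\R^d_+))\neq\emptyset$; equivalently $\bar x-A_T(x_0)$ must meet the interior of the negative orthant cone. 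I expect this to be the main obstacle, and the intended resolution is Lemma \ref{lem:KhoaLemma}: no-arbitrage is equivalent to $(-A_T^+)\cap\setint(L^q(\cF_T,\R^d_+))\neq\emptyset$, and a separation/polarity argument translates this nonempty-interior-of-the-polar statement into the required Slater point for the primal constraint (choosing $\bar x$, e.g., deep inside $A_T(x_0)$ shifted appropriately so the difference lands strictly inside $-\R^d_+$, using that $\dom F=L^p(\cF_T,\R^d)$ since $U$ is everywhere real-valued). Convexity of $F$ and $g$ required by Theorem \ref{theor:StrDual} is already supplied by Lemma \ref{lem:WellDef}, and $p\neq Z$ (i.e. $\bar p\neq\R^d$) follows from no-arbitrage together with the weak-duality inequality $\bar d\supseteq\bar p$ and the fact that the closed-form $h$ produces proper half-spaces. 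Assembling these pieces yields both the stated form of the dual problem and the strong-duality conclusion.
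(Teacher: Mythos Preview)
Your Steps 1 and 2 are essentially the paper's argument: the paper isolates the Lagrangian computation as a separate lemma (Lemma \ref{lemma:Lag}) and then takes $\inf_{x}$ exactly as you describe, including the same separability reduction to scalar concave conjugates. Your Step 3's strong-duality portion, however, is not part of Theorem \ref{thm:DualProb}; the paper proves $\bar p=\bar d$ as a separate result (Theorem \ref{theor:StrongDual}), so for the present theorem you need only the form of $h$ and the observation about the effective domain, both of which you have.
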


To prove this result, we require the following lemma
\begin{lemma} \label{lemma:Lag}
The lagrangian function for the problem \eqref{prob:P} is
$l:L^p(\cF_T,\R^d)\times L^q(\cF_T,\R^d)\times
(\R^d_+\setminus\{0\})\to\cG(\R^d,\R^d_+)$ defined by
\begin{equation}
\label{eqn:lag} 
	l(x, y^{*}, z^{*}) = \left\{ 
	\begin{array}{ll} 
		\{z\in\R^d \,|\,z^{*}(\E[-U(x)]) +y^{*}(x) \leq z^{*}(z) \} & \text{ if } y^{*} \in -A_{T}(x_0)^{+} 
		\\ \R^d & \text{ otherwise}.  
	\end{array} \right.  
\end{equation}
\end{lemma}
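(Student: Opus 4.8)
The plan is to specialize the general Lagrangian in Definition \ref{def:lag} to the concrete data $f=F$, $g(x)=x-A_T(x_0)$, $Z=\R^d$, $C=\R^d_+$, $Y=L^p(\cF_T,\R^d)$, $D=L^p(\cF_T,\R^d_+)$ identified earlier, and then simplify the resulting expression. Recall that $l(x,y^*,z^*)=f(x)\oplus\inf\{S_{(y^*,z^*)}(y)\mid y\in g(x)\}$, where $S_{(y^*,z^*)}(y)=\{z\in\R^d\mid y^*(y)\le z^*(z)\}$ is a half-space (with normal $z^*\in C^+\setminus\{0\}=\R^d_+\setminus\{0\}$) and $\oplus$ is the closure of the Minkowski sum in $\cG(C)$. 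The first step is to compute $\inf\{S_{(y^*,z^*)}(y)\mid y\in x-A_T(x_0)\}$, which by \eqref{def:infsup} is $\cl\conv\bigcup_{y\in x-A_T(x_0)}S_{(y^*,z^*)}(y)$. Each $S_{(y^*,z^*)}(y)$ is the half-space $\{z\mid z^*(z)\ge y^*(y)\}$; since these half-spaces are nested (decreasing in the value $y^*(y)$), their union is the half-space determined by $\inf_{y\in x-A_T(x_0)}y^*(y)$, and this union is already closed and convex provided the infimum is finite.

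The second step is the dichotomy on $y^*$. Writing $y=x-a$ with $a\in A_T(x_0)$, we get $y^*(y)=y^*(x)-y^*(a)$, so $\inf_{y\in x-A_T(x_0)}y^*(y)=y^*(x)-\sup_{a\in A_T(x_0)}y^*(a)$. Now $A_T(x_0)=x_0\mathbbm{1}+A_T$ with $A_T$ a convex cone, so $\sup_{a\in A_T(x_0)}y^*(a)=y^*(x_0\mathbbm{1})+\sup_{a\in A_T}y^*(a)$, and this latter supremum over the cone $A_T$ is $0$ when $y^*\in A_T^+$ (equivalently, after the sign convention in the excerpt, when $y^*\in-A_T(x_0)^+=-A_T^+$) and $+\infty$ otherwise. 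In the $+\infty$ case, every $S_{(y^*,z^*)}(y)$ with $y^*(y)$ arbitrarily negative is all of $\R^d$ in the limit, so the infimal set is $\R^d$; then $f(x)\oplus\R^d=\R^d$, giving the second branch of \eqref{eqn:lag}. In the $y^*\in-A_T(x_0)^+$ case, I need to check $\sup_{a\in A_T}y^*(a)=0$ (it is $\ge 0$ since $0\in A_T$, and $\le 0$ since $y^*$ is in the polar and $A_T$ is a cone), whence $\inf_{y\in x-A_T(x_0)}y^*(y)=y^*(x)-y^*(x_0\mathbbm{1})$. A small bookkeeping point: the constant $y^*(x_0\mathbbm{1})$ term can be absorbed; comparing with \eqref{eqn:lag} as written, I will keep the expression in the form $\{z\mid z^*(\E[-U(x)])+y^*(x)\le z^*(z)\}$ — here the convention must be that $y^*$ is already shifted (or $x_0=0$ is taken in the displayed Lagrangian, with the $x_0$-dependence carried by the dual feasible set $-A_T(x_0)^+$); I would state this normalization explicitly.

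The third step is to add back $f(x)=\E[-U(x)]+\R^d_+$. Since $z^*\in\R^d_+$, adding a vector $v\in\R^d_+$ to an element of $S_{(y^*,z^*)}(\,\cdot\,)$ only increases $z^*(z)$, so $\E[-U(x)]+\R^d_+$ Minkowski-summed with the half-space $\{z\mid z^*(z)\ge c\}$ is the half-space $\{z\mid z^*(z)\ge c+z^*(\E[-U(x)])\}$ (using $\inf_{v\in\R^d_+}z^*(v)=0$, valid because $z^*\ge 0$). This set is already closed, so the $\oplus$ (closure) is harmless, and combining with $c=y^*(x)$ from Step 2 yields exactly the first branch $\{z\in\R^d\mid z^*(\E[-U(x)])+y^*(x)\le z^*(z)\}$. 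Finally I should confirm the claimed domain of $l$: $x\in L^p(\cF_T,\R^d)$, $y^*\in L^q(\cF_T,\R^d)=Y^*$, and $z^*\in\R^d_+\setminus\{0\}=C^+\setminus\{0\}$, matching Definition \ref{def:lag}.

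The main obstacle is not any single deep fact but getting the half-space/infimum manipulation exactly right: one must be careful that $\cl\conv$ of a nested union of closed half-spaces is again the expected closed half-space (including the boundary case where the defining infimum is attained versus only approached), and careful about the sign/polar conventions so that "$y^*\in-A_T(x_0)^+$" in \eqref{eqn:lag} genuinely corresponds to "$\sup_{a\in A_T}y^*(a)<\infty$". The role of Assumption \ref{assumptions} (finite $\Omega$, so $A_T$ is a finitely generated, hence closed, polyhedral cone, as established in Lemma \ref{lem:WellDef}) is what guarantees the relevant suprema behave well and the sets stay closed; I would invoke it at the point where closedness of $A_T$ (and of $x-A_T(x_0)$) is used.
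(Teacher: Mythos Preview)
Your approach is correct and essentially identical to the paper's own proof: both compute $\bigcup_{y\in x-A_T(x_0)}S_{(y^*,z^*)}(y)$ by reducing to the scalar infimum $\inf_{y}y^*(y)$, recognize this as (the negative of) an indicator on the polar cone to obtain the dichotomy, and then absorb $F(x)=\E[-U(x)]+\R^d_+$ into the half-space using $z^*\in\R^d_+\setminus\{0\}$. You are in fact slightly more careful than the paper in flagging the $y^*(x_0\mathbbm{1})$ shift; the paper simply declares $A_T(x_0)$ to be a cone and writes $\inf_{y\in-A_T(x_0)}y^*(y)=-\delta_{(-A_T(x_0))^+}(y^*)$ without isolating the translate.
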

\begin{proof}

Note that the positive dual cones of $L^p(\cF_T,\R^d_+)$
and $\R^d_+$ are $L^q(\cF_T,\R^d_+)$ and $\R^d_+$ respectively. Hence
the
Lagrangian function $l$ has domain on $L^p(\cF_T, \R^d) \times
L^q(\cF_T,\R^d)\times
(\R^d_+\setminus\{0\})$, as per Definition \ref{def:lag}.

Also from Definition \ref{def:lag}, we see that \begin{align}\label{eqn:lagtmp}
l(x,
y^{*}, z^{*})= F(x) \oplus \bigcup_{y \in x-A_{T}(x_{0})} S_{y^{*},
z^{*}}(y) \end{align} The union on the right side can be written
explicitly
\begin{align*} \bigcup_{y\in x-A_T(x_0)}\{z\in\R^d\,|\,y^*(y)\le
z^*(z)\}
&=\{z\in\R^d\,|\,\inf_{y\in x-A_T(x_0)}y^*(y)\le z^*(z)\}
\\&=\{z\in\R^d\,|\,\inf_{y\in  -A_T(x_0)}y^*(y)+y^*(x) \le
z^*(z)\}\,.
\end{align*} Note that $A_T(x_0)= x_0-K_0-L^p(\cF_1,K_1)-\cdots-L^p(\cF_T,K_T)$
is a
cone in $L^p(\cF_T,K_T)$. The infimum on the right side
is the support function on a cone \cite{ConvAnal}, and can be
written
\begin{equation*} 
\inf_{y\in -A_T(x_0)}y^*(y)=-\delta_{-A_T(x_0)^+}(y^*)\,, 
\end{equation*}
where $\delta_{-A_{T}(x_0)^{+}}(y^{*})$ is the indicator function on
$-A_{T}(x_0)^{+}$,
equal 0 if $y^{*}$ belongs to $-A_{T}^{+}$ and  $\infty$
otherwise. Hence, identity \eqref{eqn:lagtmp} becomes
\begin{align*}
l(x,y^*,z^*)&=\E[-U(x)]+\{z\in\R^d\,|\,-\delta_{-A_T(x_0)^+}(y^*)+y^*(x) 
\le z^*(z)\}\oplus \R^d_+ \\
&=\{z\in\R^d\,|\,-\delta_{-A_T(x_0)^+}(y^*)+z^*(\E[-U(x)])
+y^*(x)\le z^*(z)\}\oplus \R^d_+ 
\end{align*}
Since $z^*\in\R^d_+ \setminus \{0\}$,
\begin{equation*}
\{z\in\R^d\,|\, a\le z^*(z)\}\oplus\R^d_+=\{z\in\R^d\,|\, a\le
z^*(z)\}
\end{equation*}
for any constant $a$ in $\R\cup \{-\infty\}$. It follows that
\begin{equation*} l(x,y^*,z^*) =\{z\in\R^d\,|\,
-\delta_{(-A_T(x_0))^+}(y^*)+z^*(\E[-U(x)]) +y^*(x)\le
z^*(z)\}
\end{equation*}
which deduces identity \eqref{eqn:lag}.
\end{proof}

Recall that the coordinate functions $u_i(x)$ of the utility 
function $U(x)$ are real-valued for each $x \in \mathbb{R}^d$.
Combining this with the fact that $\Omega$ is finite (and hence
the expectation in the problem formulation is finite) yields
the following proposition.

\begin{proposition}
The objective function of \eqref{prob:P} can be recovered from the
lagrangian \eqref{eqn:lag}. That is,
$$ \sup_{(y^{*}, z^{*}) \in L^{q}(\cF_{T}, \R^d) \times \R^{d}_{+}
\setminus \{ 0 \} } l(x, y^{*}, z^{*}) = \left\{ \begin{array}{cc}
\mathbb{E}[-U(x)] + \mathbb{R}^{d}_{+}: & 0 \in x - A_{T}(x_0) \\
\emptyset: & \text{otherwise} \end{array} \right. $$
\end{proposition}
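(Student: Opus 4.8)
The plan is to apply Theorem \ref{theor:PrLag} (\cite{HaLo14}, Prop 2.1) directly, since the Lagrangian for \eqref{prob:P} has already been computed in Lemma \ref{lemma:Lag}. The only hypothesis of Theorem \ref{theor:PrLag} that needs checking is that $F(x) \neq Z = \R^d$ for every $x \in L^p(\cF_T, \R^d)$. This is precisely where the real-valuedness of the coordinate utilities $u_i$ and the finiteness of $\Omega$ enter: for each $x$, $\E[-U(x)]$ is a genuine vector in $\R^d$ (no coordinate is $\pm\infty$, since each $u_i$ is real-valued on all of $\R$ and the expectation is a finite sum $\sum_{n} p_n(-U(x(\omega_n)))$), so $F(x) = \E[-U(x)] + \R^d_+$ is a proper translate of the nonnegative orthant and in particular is not all of $\R^d$.

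First I would record the observation above, namely that $F(x)$ is always a nonempty proper subset of $\R^d$, so the standing assumption $f(x) \neq Z$ of Theorem \ref{theor:PrLag} is met with $f = F$. Next I would invoke Lemma \ref{lemma:Lag} to identify the Lagrangian $l(x, y^*, z^*)$ with the explicit expression \eqref{eqn:lag}, and then quote Theorem \ref{theor:PrLag} with $Y^* = L^q(\cF_T, \R^d)$, $Z = \R^d$, $C = \R^d_+$ and the constraint map $g(x) = x - A_T(x_0)$ identified in Lemma \ref{lem:WellDef}. The theorem then gives immediately that
$$\sup_{(y^*, z^*)} l(x, y^*, z^*) = \bigcap_{(y^*, z^*)} l(x, y^*, z^*) = \begin{cases} F(x) & 0 \in g(x) = x - A_T(x_0) \\ \emptyset & \text{otherwise,}\end{cases}$$
and since $F(x) = \E[-U(x)] + \R^d_+$ this is exactly the claimed identity. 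The supremum here is the set-valued supremum in $(\G(\R^d_+), \supseteq)$, i.e.\ the intersection, as recorded in \eqref{def:infsup}.

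I do not expect a genuine obstacle: the proposition is essentially a specialization of an abstract result already stated in the excerpt, and the only content is verifying the one hypothesis. If one wanted a self-contained argument rather than a citation, the mild work would be in showing the two directions by hand --- that when $0 \notin x - A_T(x_0)$ one can, by choosing $y^* \in -A_T(x_0)^+$ separating $0$ from the closed convex cone $x - A_T(x_0)$ (closedness coming from Lemma \ref{lem:WellDef}) and scaling, drive $-\delta_{-A_T(x_0)^+}(y^*) + z^*(\E[-U(x)]) + y^*(x)$ to $+\infty$ so that the corresponding $l(x,y^*,z^*) = \emptyset$; and that when $0 \in x - A_T(x_0)$ the term $y^*(0) \geq \inf_{y \in x - A_T(x_0)} y^*(y)$ forces each $l(x, y^*, z^*) \supseteq F(x)$ while taking $y^* = 0$ and letting $z^*$ range over $\R^d_+ \setminus \{0\}$ recovers $F(x)$ exactly in the intersection. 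But given that Theorem \ref{theor:PrLag} is available, the clean route is simply to cite it.
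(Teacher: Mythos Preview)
Your proposal is correct and matches the paper's own proof essentially verbatim: the paper notes that real-valuedness of each $u_i$ together with finiteness of $\Omega$ guarantees $F(x)\neq\R^d$, and then simply invokes Theorem \ref{theor:PrLag}. Your additional sketch of a self-contained argument is extra but not needed.
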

\begin{proof}
This follows immediately from the above comments and an application of Theorem
\ref{theor:PrLag}.
\end{proof}

Using Lemma \ref{lemma:Lag}, we complete the proof of Theorem \ref{thm:DualProb}.
\begin{proof}
From the definition of the dual objective \eqref{duallag}
$$h(y^{*}, z^{*}) = \inf_{x \in L^{p}(\cF_{T}, \R^d)} l(x, y^*,
z^*).$$
If $y^{*} \not \in -A_{T}(x_0)^{+}$, this is $\R^d$. In the case that $y^{*}
\in -A_{T}(x_0)^+$
\begin{align*}
&\inf_{x \in L^{p}(\cF_{T}, \R^d)} l(x, y^*, z^*) \\
&= \inf_{x \in L^{p}(\cF_{T}, \R^d)} \{z\in\R^d\,|\,
-\delta_{(-A_T(x_0))^+}(y^*)+z^*(\E[-U(x)]) +y^*(x)\le
z^*(z)
\} \\
&= \cl \bigcup_{x \in L^{p}(\cF_{T}, \R^d)} \{z| y^{*}(x) + z^{*}(\E[-U(x)]) \leq z^{*}(z) \} \\
&= \cl \{z |  \inf_{x \in
L^{p}(\cF_{T},\R^d)} z^{*}(\E[-U(x)]) + y^{*}(x) \leq z^{*}(z) \}.
\end{align*}
The infimum in the expression above is the Fenchel conjugate of a
sum of convex functions. Since the Fenchel conjugate of a proper convex
function is proper and lower semicontinuous \cite[Theorem
11.1]{VarAnal}, this infimum is attained,
so we can drop the closure from the expression. Thus,
$$=\{z| \inf_{x \in L^{p}(\cF_{T}, \R^d)}
z^{*}(\E[-U(x)]) + y^{*}(x) \leq z^{*}(z)\}$$
and the result is proven.

The final part of the theorem, in which we reformulate the dual objective in terms of 
concave conjugates, follows because
\begin{align*}
\inf_{x \in L^{p}(\cF_{T}, \R^d)} z^{*}(\E[-U(x)]) + y^{*}(x) \\
=\inf_{x \in L^{p}(\cF_{T}, \R^d)} \E[\langle z^{*}\mathbbm{1},- U(x) \rangle] + y^{*}(x) \\
=\inf_{x \in L^{p}(\cF_{T}, \R^d)} \E[\langle z^{*}\mathbbm{1}, -U(x) \rangle + \langle y^{*}, x \rangle] \\
=\inf_{x \in L^{p}(\cF_{T}, \R^d)} \E[\sum_{i=1}^{d} -z_{i}^{*} u_i(x_i) +  y_i^{*} x_i] \\
=\inf_{x \in L^{p}(\cF_{T}, \R^d)} \sum_{\omega \in \Omega} \mathbb{P}[\omega] \left(\sum_{i=1}^{d} -z_{i}^{*} u_i(x_i(\omega))
 + y_i^{*}(\omega) x_i(\omega) \right) \\
\end{align*}
Exploiting separability over the sum gives
$$= \sum_{\omega \in \Omega} \mathbb{P}[\omega] \left(\sum_{i=1}^{d} \inf_{x_{i}(\omega) \in \R} -z_{i}^{*} u_i(x_i(\omega))
 + y_i^{*}(\omega) x_i(\omega) \right)$$
$$= \mathbb{E}\left[\sum_{i=1}^{d} \inf_{x_{i} \in L^{p}(\mathcal{F}, \R)} y_{i}^{*}(x_{i}) - z_{i}^{*} \mathbbm{1} u_{i}(x_{i})\right]$$
from which the result follows immediately. More details can be found in the next section, where we perform
the details of this calculation more slowly with an example problem for context.

\end{proof}

In the language of set-valued Fenchel conjugates, we have the following easy
corollary.
\begin{corollary}
The objective function of the dual problem is
\begin{equation}
h(y^*,z^*)= \left\{ \begin{array}{ll} -F^*(-y^*,z^*) & \text{ if
}
y^*\in -A_T(x_0)^+ \\ \R^d&\text{ otherwise}.  \end{array} \right.
\end{equation}
where $F^*$ is the Fenchel conjugate of $F$.
\end{corollary}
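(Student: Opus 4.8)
The plan is to prove the corollary by comparing the dual objective formula in Theorem~\ref{thm:DualProb} with the definition of the (negative) Fenchel conjugate $-F^*$ from Definition~\ref{def:dualf}, recalling that $S_{(x^*, z^*)}(-x) = \{z \in \R^d \mid x^*(-x) \leq z^*(z)\} = \{z \in \R^d \mid -x^*(x) \leq z^*(z)\}$. First I would observe that it suffices to treat the case $y^* \in -A_T(x_0)^+$, since in the complementary case both the corollary's formula and the formula \eqref{dualobj} for $h$ give $\R^d$ by definition. So fix $y^* \in -A_T(x_0)^+$ and $z^* \in \R^d_+ \setminus \{0\}$, and unwind $-F^*(-y^*, z^*)$ using Definition~\ref{def:dualf} applied to $F(x) = \E[-U(x)] + \R^d_+$ with the dual variable pair $(x^*, z^*) = (-y^*, z^*)$.

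The key computation is then the following chain: by Definition~\ref{def:dualf},
\begin{align*}
-F^*(-y^*, z^*) &= \cl \bigcup_{x \in L^p(\cF_T, \R^d)} \left[ F(x) + S_{(-y^*, z^*)}(-x) \right] \\
&= \cl \bigcup_{x} \left[ \big(\E[-U(x)] + \R^d_+\big) + \{z \mid y^*(x) \leq z^*(z)\} \right].
\end{align*}
Using that $z^* \in \R^d_+ \setminus \{0\}$ so that $\{z \mid a \leq z^*(z)\} \oplus \R^d_+ = \{z \mid a \leq z^*(z)\}$ (the same fact invoked in the proof of Lemma~\ref{lemma:Lag}), and that $F(x) + S_{(-y^*,z^*)}(-x) = \{z \mid z^*(\E[-U(x)]) + y^*(x) \leq z^*(z)\}$, the union over $x$ collapses to $\{z \mid \inf_x [z^*(\E[-U(x)]) + y^*(x)] \leq z^*(z)\}$, exactly as in the displayed steps of the proof of Theorem~\ref{thm:DualProb}. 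The closure can be dropped by the same lower-semicontinuity/attainment argument already used there. This yields precisely the first branch of \eqref{dualobj}, i.e. $h(y^*, z^*) = -F^*(-y^*, z^*)$ for $y^* \in -A_T(x_0)^+$.

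Since essentially every ingredient has already been assembled in the proof of Theorem~\ref{thm:DualProb} and Lemma~\ref{lemma:Lag}, there is no serious obstacle here --- the corollary is genuinely a matter of recognizing that the expression $\cl \bigcup_x [f(x) + S_{(x^*,z^*)}(-x)]$ appearing in the dual-objective computation is, by definition, $-F^*(-y^*, z^*)$. The one point deserving a sentence of care is the bookkeeping of signs: the Lagrangian contributes $y^*(x)$, the conjugate contributes $S_{(x^*,z^*)}(-x)$ with $x^* = -y^*$, and one must check $(-y^*)(-x) = y^*(x)$ so that the two expressions match; this is immediate. I would therefore present the proof as a short two-line dereference to the earlier computation together with this sign check, rather than redoing the infimum calculation.
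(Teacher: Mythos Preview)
Your proposal is correct and follows essentially the same approach as the paper: both compute $-F^*(-y^*,z^*)$ directly from Definition~\ref{def:dualf}, simplify the resulting union of half-spaces using $z^*\in\R^d_+\setminus\{0\}$, and observe that the outcome coincides with the expression for $h$ in \eqref{dualobj}. Your explicit handling of the case $y^*\notin -A_T(x_0)^+$ and the sign check $(-y^*)(-x)=y^*(x)$ are minor expository additions, not a different argument.
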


\begin{proof}
We compute the Fenchel conjugate of $F$. For
every $y^*\in L^q(\cF_T,\R^d)$, $z^*\in \R^d_+\setminus\{0\}$, it
follows from Definition \ref{def:dualf} that 
\begin{align} \label{tmp1} 
-F^*(-y^*,z^*) &= \cl \bigcup_{x \in X} \left( F(x) +
S_{(-y^{*}, z^{*})}(-x) \right) \nonumber\\
&=\cl\bigcup_{x\in L^p(\cF_T,\R^d)} \left( \E[-U(x)]+\R^d_+ +\{z|
-y^{*}(-x) \leq z^{*}(z) \} \right) \nonumber\\
&= \cl \bigcup_{x \in L^p( \cF_T, \R^d)} \left\{ z + \mathbb{E}[-U(x)] +
\R^d_{+} | - y^{*} (-x) \leq z^{*}(z) \right\} \\
\end{align}
Since $z^{*} \in \R^{d}_{+} \setminus \{0\}$, $z^{*}(r) \geq 0$ for
each $r \in \R^{d}_{+}$, \eqref{tmp1} becomes
\begin{align*}
&=\cl \bigcup_{x \in L^p( \cF_T, \R^d)} \left\{ z | y^{*}(x) \leq z^{*} (z -
\mathbb{E}[-U(x)] \right\} \\
&=\cl \left\{ z| \inf_{x \in L^p( \cF_T, \R^d)} y^{*}(x) +
z^{*}(\mathbb{E}[-U(x)]) \leq z^{*}(z) \right\} \\
&= \left\{ z| \inf_{x \in L^p( \cF_T, \R^d)} y^{*}(x) +
z^{*}(\mathbb{E}[-U(x)]) \leq z^{*}(z) \right\}.
\end{align*}
This agrees with \eqref{dualobj}.
\end{proof}


\begin{theorem}\label{theor:StrongDual}
Strong duality holds between the problems \eqref{prob:P} and
\eqref{prob:D}. That is, 
\begin{align*}
\overline{p} = &\inf F(x) = &\sup(y^{*}, z^{*}) = \overline{d}\\
&\text{subject to } x \in A_{T}(x_0) &\text{subject to } y^{*} \in
-A_{T}(x_0)^{+} \\ & &z^{*} \in \R^d_+ \setminus \{ 0 \}.
\end{align*}
\end{theorem}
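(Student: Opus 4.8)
The plan is to invoke the abstract strong-duality result Theorem \ref{theor:StrDual} with the identifications already set up in the excerpt: $X = Y = L^p(\cF_T,\R^d)$, $Z = \R^d$, $C = \R^d_+$, $D = L^p(\cF_T,\R^d_+)$, the objective $f = F$ given by $F(x) = \E[-U(x)] + \R^d_+$, and the constraint map $g(x) = x - A_T(x_0)$. By Lemma \ref{lem:WellDef}, $F$ and $g$ are well-defined maps into $\cG(C)$ and $\cG(D)$ respectively, and both are convex, so two of the three hypotheses of Theorem \ref{theor:StrDual} are already in hand. It then remains to check (i) that $\overline p \neq Z = \R^d$, i.e. the primal value is not the bottom element of the lattice, and (ii) that the Slater condition holds for \eqref{prob:P}: there exists $\overline x \in \dom F$ with $g(\overline x) \cap \setint(-D) \neq \emptyset$. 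Once these are verified, Theorem \ref{theor:StrDual} yields $\overline p = \overline d$ exactly in the form claimed, since Theorem \ref{thm:DualProb} has already identified the dual problem \eqref{prob:D} and its objective $h$ with the abstract dual \eqref{prob:D_e}.

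For step (i), I would argue that $\dom F = L^p(\cF_T,\R^d)$ (each $u_i$ is real-valued, $\Omega$ is finite, so $\E[-U(x)]$ is a genuine vector in $\R^d$ and $F(x) \neq \emptyset$ for every $x$), so the feasible set $\{x : 0 \in g(x)\} = A_T(x_0)$ is nonempty — e.g. $x_0\mathbbm 1 \in A_T(x_0)$ — and hence $\overline p \supseteq F(x_0\mathbbm 1) \neq \R^d$; a proper closed half-space-type set of the form $\{z : a \le z^*(z)\}$ with $z^* \in \R^d_+\setminus\{0\}$ is never all of $\R^d$, and in fact $\overline p$ is an intersection/closure-convex-hull of such sets capped below, so $\overline p \neq \R^d$. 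The no-arbitrage assumption is what prevents $\overline p$ from being too large as well, but for step (i) only nonemptiness of the feasible set and finiteness of $U$ are needed.

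Step (ii), the Slater condition, is the main obstacle. Here $g(\overline x) = \overline x - A_T(x_0)$ and $-D = -L^p(\cF_T,\R^d_+)$, so we need $\overline x$ such that $\overline x - A_T(x_0)$ meets $-\setint L^p(\cF_T,\R^d_+) = \{-y : y \in L^p(\cF_T,\R^d),\ y(\omega) \in \setint \R^d_+\ \forall\omega\}$ — equivalently, we need some $a \in A_T(x_0)$ with $a - \overline x$ strictly positive componentwise in every state. Since $A_T(x_0) = x_0\mathbbm 1 - K_0\mathbbm 1 - L^p(\cF_1,K_1) - \cdots - L^p(\cF_T,K_T)$ is a cone containing $x_0\mathbbm 1$ and containing $-L^p(\cF_T,\R^d_+)$ (because $\R^d_+ \subseteq K_T$), taking $\overline x = x_0\mathbbm 1 - \epsilon\mathbbm 1$ for small $\epsilon>0$ gives $x_0\mathbbm 1 - \overline x = \epsilon\mathbbm 1 \in \setint L^p(\cF_T,\R^d_+)$, so Slater holds with lots of room. (One should double check that the constraint-space Slater point is genuinely interior in the $L^p$ topology, which is immediate here since the space is finite-dimensional and the positive orthant has nonempty interior.) Alternatively, if one wants a Slater point adapted to the no-arbitrage structure, Lemma \ref{lem:KhoaLemma} supplies a strictly positive consistent price system, but for the direct verification above that is not needed. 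Assembling (i) and (ii) with Lemma \ref{lem:WellDef} and feeding them into Theorem \ref{theor:StrDual} completes the proof.
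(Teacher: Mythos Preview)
Your Slater verification (step (ii)) is fine and essentially matches the paper's: pick $\overline{x}$ strictly componentwise below $x_0\mathbbm{1}$, note $x_0\mathbbm{1}\in A_T(x_0)$, and conclude $\overline{x}-x_0\mathbbm{1}\in \setint(-L^p(\cF_T,\R^d_+))$.

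Step (i), however, has a genuine gap. Recall that in $(\cG(\R^d_+),\supseteq)$ the infimum is $\overline{p}=\cl\conv\bigcup_{x\in A_T(x_0)}F(x)$, so feasibility of $x_0\mathbbm{1}$ only yields $\overline{p}\supseteq F(x_0\mathbbm{1})$. That containment goes the \emph{wrong way} for concluding $\overline{p}\neq\R^d$: a set containing a proper subset may still be all of $\R^d$. Your follow-up sentence describing $\overline{p}$ as ``an intersection/closure-convex-hull of such sets capped below'' conflates the lattice infimum (a union-type operation) with a supremum (an intersection). Concretely, each $F(x)$ is a translated orthant $\E[-U(x)]+\R^d_+$; the closed convex hull of a union of such orthants equals $\R^d$ as soon as the vertices $\E[-U(x)]$ are unbounded below, and nothing in ``nonemptiness of the feasible set and finiteness of $U$'' rules this out. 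Along directions in $A_T$ one component of $x$ can go to $+\infty$, and since the $u_i$ need not be bounded above under the stated hypotheses, $\E[-u_i(x_i)]$ can tend to $-\infty$.

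The paper's proof handles this by going through weak duality: $\overline{p}\subseteq\overline{d}\subseteq h(\tilde y^*,\mathbbm{1}_d)$ for a particular $\tilde y^*\in -A_T(x_0)^+$ with all components of one sign, supplied by the no-arbitrage assumption via Lemma~\ref{lem:KhoaLemma}. One then computes that $\inf_{x}\,\mathbbm{1}_d(\E[-U(x)])+\tilde y^*(x)$ is finite (this is a separable Fenchel-conjugate calculation using the Inada condition and $u_i'\to 0$, so that every $\tilde y_i^*(\omega)$ lies in the range of $u_i'$ and the supremum is attained), whence $h(\tilde y^*,\mathbbm{1}_d)$ is a proper closed half-space and $\overline{p}\neq\R^d$. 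So contrary to your remark, no-arbitrage is not optional for step (i); it is exactly what produces the dual-feasible point needed to bound $\overline{p}$ from above in the lattice order.
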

\begin{proof}
By Theorem \ref{theor:StrDual} and Lemma \ref{lem:WellDef}, it
suffices to show that $\overline{p}= \inf_{x \in A_{T}(x_0)} F(x) \neq \R^d$ and that
Slater's condition is satisfied. 

For the first part, we use weak duality. By Proposition
\ref{prop:WeakDual}, $ \overline{p} \subseteq \overline{d}$, so it suffices to show that
$\overline{d} \neq \R^d$. Lemma \ref{lem:KhoaLemma}
give that $-A_{T}(x_0)^{+} \cap \setint(L^{P}(\cF_{T}, \R^{d}_{+})$ 
is nonempty, so there exists $\tilde{y}^{*} \in -A_{T}(x_0)^{+}$ with
$\tilde{y}^{*}(\omega)_{i} <0$ for each $\omega \in \Omega$, $1 \leq i \leq
d$. Then
$$\overline{p} \subseteq \sup_{(y^{*}, z^{*}) \in -A_{T}(x_0)^{+} \times z^{*}
\in \R^{d}_{+} \setminus \{ 0 \}} h(y^{*}, z^{*}) \subseteq \{z | \inf_{x \in L^{p}
(\cF_{T}, \R^d)} \mathbbm{1}_{d} (\mathbb{E}[-U(x)]) + y^{*}_{0}(x) \leq
\mathbbm{1}_{d}(z) \}.$$
The last containment follows by taking $y^{*} = \tilde{y}^{*}$ and $z^{*}
= \mathbbm{1}_{d}$, the $d$-dimensional vector consisting of all ones.
So it suffices to show that 
$$\inf_{x \in L^{P}(\cF_{T}, \R^d)} \mathbbm{1}_{d}(\mathbb{E}[-U(x)]) -
\tilde{y}^{*}(x) > -\infty,$$
which is equivalent to
$$\sup_{x \in L^{P}(\cF_{T}, \R^d)}  
\tilde{y}^{*}(x) -\mathbbm{1}_{d}(\mathbb{E}[-U(x)])< \infty.$$
Note that the left hand side of this expression is simply the
Fenchel-Conjugate of the function $\mathbbm{1}_{d}(\mathbb{E}[-U(x)])$. We
have
\begin{align} \label{tmpexp1}
& \sup_{x \in L^{p}(\cF_{T}, \R^d)}  \tilde{y}^{*}(x)
-\mathbbm{1}_{d}(\mathbb{E}[-U(x)]) \nonumber \\
&= \sup_{x \in L^{p}(\cF_{T}, \R^d)} \mathbb{E}[\langle \tilde{y}^{*}(\omega),
x(\omega) \rangle] - \mathbb{E}[\langle \mathbbm{1}, -U(x(\omega))
\rangle ] \nonumber \\
&= \sup_{x \in L^{p}(\cF_{T}, \R^d)} \sum_{\omega =
\omega_{1}}^{\omega_{n}} \mathbb{P}[\omega] \left[ \sum_{i=1}^{d}
\tilde{y}_{i}^{*}(\omega) x_{i}(\omega) - (-u_{i} (x_{i}(\omega))) \right]
\nonumber \\
&= \sum_{i=1}^{d} \sum_{\omega=\omega_{1}}^{\omega_{n}}
\mathbb{P}[\omega] \left[ \sup_{x_{i}(\omega)} \tilde{y}^{*}_{i}(\omega) -
(-u_{i}(x_{i}(\omega))) \right] .
\end{align}
The first equality follows from the definition of the inner product in
$L^{p}(\cF_{T}, \R^d)$. The second and third come from the finiteness
of $\Omega$ and the separability of the expression, respectively.

Since each $u_i'$ is continuous with range $(- \infty,
0)$, and each $\tilde{y}_{i}^{*}(\omega) <0$, the intermediate
value theorem gives that, for each $\omega \in \Omega$, $1 \leq i \leq
d$, there exists $\tilde{x}_{i}(\omega)$ such that $u_{i}'(\tilde{x}_{i}(\omega)) =
\tilde{y}^{*}_{i}(\omega)$. By \cite[Theorem 23.5]{ConvAnal}, 
$$\sup_{x(\omega)_{i}} \tilde{y}^{*}_{i}(\omega)
-(-u_{i}(\tilde{x}_{i}(\omega)))$$ 
achieves its supremal value at $\tilde{x}_{i}(\omega).$ It follows that
\eqref{tmpexp1} is finite, from which we conclude that $\overline{p}
\neq \R^d$.

Next we show that Slater's condition is satisfied. We want to find an
$\overline{x} \in \dom F$ such that $x-A_{T}(x_0) \cap
\setint(-L^{p}(\F_{T}, \R^{d}_{+})) \neq
\emptyset$. Recall from the problem formulation that $\dom F =
L^{p}(\F_{T}, \R^d)$, so the first part of Slater's condition is not a
restriction. Note that since
$$A_{T}(x_0) = x_0 \mathbbm{1} - K_0 \mathbbm{1} - L^{p}(\cF_1, K_1)
- ... - L^p(\cF_T, K_T)$$
where each $K_t$ is a solvency cone, $x_0 \mathbbm{1} \in A_T(x_0)$.
Then, choose $\overline{x}$ such that $\overline{x}_{i}(\omega) <
(x_0)_{i}$ for each component $1 \leq i \leq d$ and for all $\omega \in \Omega$.
 We have
that $\overline{x} - x_0 \mathbbm{1} \in x - A_{T}(x_0)$ and also
$\overline{x} - x_0 \mathbbm{1} \in \setint(-L^{p}(\cF_{T},
\R_{+}^{d}))$, so Slater's condition is satisfied.
\end{proof}
\section{An Example}
In this final section, we explore an example  which we hope will 
help to illustrate the theoretical results from previous sections. 

We consider a market with 2 assets and 3 time steps, so that the time
step $t$ ranges from 0 to $3$. The probability space $\Omega =
\bigtimes_{i=1}^{3}  \{-1, 0, 1\}$, with the probability measure
$\mathbb{P}$ defined uniformly on this set. In other words, the
possible outcomes $\omega$ are defined as a tuple $\omega = (\omega_1,
\omega_2, \omega_3)$, $\omega_1, \omega_2, \omega_3 \in \{-1, 0 , 1\}$. 
From the decision maker's
perspective, we have that at each time step $t$ the random variable taking
values $\omega_t$ becomes known. Thus the filtration
$(\mathcal(F_{t})_{t=0}^{3})$ is defined by $\mathcal{F}_{t} =
\sigma(\omega_{i} |_{i \leq t})$, the sigma algebra generated by these
random variables. We also take $\mathcal{F} = \mathcal{F}_{3}= 
\sigma(\omega_{1}, \omega_{2}, \omega_{3})$, the sigma-algebra of full information.

The bid-ask process $(\Pi_{t})_{t=0}^{T}$ is defined as follows:
$$\Pi_{t}(\omega) = \left[ \begin{array}{cc} 1 & 1 \\ 8 \cdot
2^{\sum_{i\leq t} \omega_i} & 1 \end{array} \right].$$
This is obviously $(\mathcal{F}_{t})_{t=0}^{3}$ adapted, and one can
also easily check that the properties of bid-ask matrix are satisfied
for each realization. The solvency cones generated by this process are
$$K_{t}(\omega)=\text{cone}\{(1,-1), (-1, 8 \cdot 2^{\sum_{i \leq t}
\omega_{i}}) \} = \{(x,y) | x+y \geq 0,  8 \cdot 2^{\sum_{i \leq t}
\omega_{i}} x + y \geq 0 \}.$$
Figure \eqref{conefig} illustrates these cones for various times and realizations
of $\omega$. 

\begin{figure}
  \includegraphics[width=\linewidth]{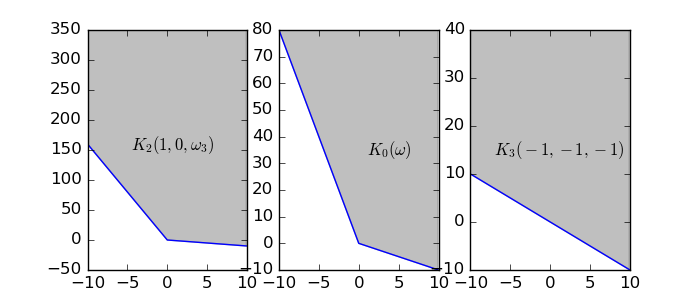}
  \caption{The solvency cones $K_{t}(\omega)$ for various $t$ and
$\omega.$}
  \label{conefig}
\end{figure}

We define our vector-valued objective function to be
$$U(x) = (-e^{-x_{1}}, - e^{-x_{2}})^{T}$$
where $x = (x_{1}, x_{2})^{T}$ is the quantity of physical assets
we have at terminal time.  Our set-valued objective function to be minimized is then
$$F(x) = \mathbb{E}\left[ \left( e^{-x_1} , e^{-x_2}
\right)^{T} \right] + \mathbb{R}_{+}^{2}.$$
We assume that our initial endowment is $(0,0)^T$. The set of self-financing 
portfolios is
$$A_{3} = -K_{0} \mathbbm{1} - L^{p}(\sigma(\omega_1); K_{1}) -
L^{p}(\sigma(\omega_1, \omega_2), K_{2}) - L^{p}(\sigma(\omega_1,
\omega_2, \omega_3), K_{3})$$
where $K_t(\omega)$ are given as above.

We can then formulate the primal portfolio optimization problem as
\begin{align*} \label{prob:P_ex} \tag{$\mathcal{P}_{ex}$}
\text{minimize } \mathbb{E}\left[ \left( e^{-x_1} , e^{-x_2}
\right)^{T} \right] + \mathbb{R}_{+}^{2} \\
\text{subject to } x \in A_{3}
\end{align*}

According to theorem \eqref{thm:DualProb}, the dual problem is then
\begin{align*} \label{prob:D_ex} \tag{$\mathcal{D}_{ex}$}
\sup \{z | \inf_{x \in L^{P}(\R^2, \mathcal{F}_{3})} z^{*}(
\mathbb{E}\left[\left( -e^{x_1} , -e^{x_2} \right)^{T}
\right]) + y^{*}(x) \leq z^{*}(z) \} \\
\text{subject to }  y^{*} \in -A_{3}^{+} \\
 z^{*} \in \R_{+}^{2} \setminus \{ 0 \}
\end{align*}   

First, we investigate the nature of the set $-A_{3}^{+}$. From
\cite{ConvAnal}[Cor 16.4.2],
$$-A_{3}^{+} = (K_{0} \mathbbm{1} + L^{p}(\mathcal{F}_{1}, K_{1})
+L^{p}(\mathcal{F}_{2}, K_{2}) + L^{p}(\mathcal{F}_{3}, K_{3}))^{+}$$
$$= \bigcap_{i=0}^{3} (L^{p}(\mathcal{F}_{i},
K_{i}))^{+}.$$
In other words, $y^{*} \in L^{q}(\mathcal{F}, \R^2)$ is in
$-A_{3}^{+}$ when $y(\omega) \in K_{t}(\omega)^{+}$ for each $t=0,...,3$. 

We can explicitly compute the cones $K_{t}(\omega)^{+}$. We have that 
$$K_{t}(\omega) = \text{cone}\left(\text{conv}\left\{ \left(\begin{array}{c} 1 \\ -1
\end{array}\right), \left(\begin{array}{c} -1 \\ 8 \cdot
2^{\sum_{i \leq t} \omega_{i}} \end{array} \right)  \right\} \right)^{+}.$$
Hence the dual cone is
$$K_{t}(\omega)^{+} = \text{cone}\left(\text{conv}\left\{
\left(\begin{array}{c} -1 \\ -1 \end{array}\right),
\left(\begin{array}{c} - 8 \cdot 2^{\sum_{i \leq t} \omega_{i}} \\ -1 \end{array} \right)  
\right\} \right).$$
Next we take the intersection of these cones to form
$A_{3}^{+}(\omega)$. For a fixed $\omega$, let $s(\omega) =
\min_{j=0,1,2,3} \sum_{i=1}^{j} \omega_{i}$. Then
$$-A_{3}(\omega)^{+} = \text{cone}\left(
\left(\begin{array}{c} -1 \\ -1 \end{array}\right),
\left(\begin{array}{c} - 8 \cdot 2^{s(\omega)} \\ -1 \end{array} \right)  
 \right).$$
Figure \eqref{DualConeFig} illustrates $K^{+}_{t}(\omega)$ for various
times and realizations of $\omega$.

\begin{figure}
  \includegraphics[width=\linewidth]{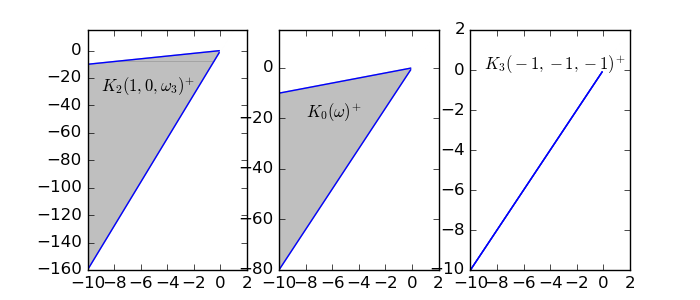}
  \caption{The positive polar cones $K_{t}(\omega)^{+}$ for various $t$ and
$\omega.$}
  \label{DualConeFig}
\end{figure}

Now we work with to simplifiy the dual problem \eqref{prob:D_ex}. The
objective function is
\begin{align*}
h(y^{*}, z^{*}) = 
\left\{ z| \inf_{x \in L^{p}(\mathcal{F}, R^2)}  z^{*}(\mathbb{E}\left[(e^{-x_1},
e^{-x_2})^{T} \right]) + y^{*}(x) \leq z^{*}(z) \right\} \\
= \left\{z| \inf_{x \in L^{p}(\mathcal{F}, R^2)}  \mathbb{E}\left[z_{1}^{*}
\mathbbm{1} e^{-x_1}+ z_{2}^{*} \mathbbm{1} e^{-x_2} \right] +
y^{*}(x) \leq z^{*}(z) \right\}
\end{align*}
Recall that the nature of linear functionals 
$y^{*} \in L^{q}(\mathcal{F}, \mathbb{R}^d)$ is $y^{*}(x) =
\mathbb{E}\langle y^{*}, x \rangle$. Hence the objective becomes
\begin{align*}
\left\{z| \inf_{x \in L^{p}(\mathcal{F}, R^2)}  \mathbb{E}\left[z_{1}^{*}
\mathbbm{1} e^{-x_1}+ z_{2}^{*} \mathbbm{1} e^{-x_2}  +
y_{1}^{*}x_{1} + y_{2}^{*} x_{2} \right] \leq z^{*}(z) \right\}
\end{align*}
Using the fact that our probability space is finite, we expand the
expectation
\begin{align*}
\left\{z| \inf_{x \in L^{p}(\mathcal{F}, \R^2)}  \frac{1}{27}
\sum_{\omega \in \Omega} z_{1}^{*} e^{-x_1(\omega)}+ z_{2}^{*}
 e^{-x_2(\omega)}  +
y_{1}^{*}(\omega) x_{1}(\omega) + y_{2}^{*}(\omega) x_{2}(\omega) 
\leq z^{*}(z) \right\}
\end{align*}

This infimum is separable over the $x_{i}(\omega)$ variables. Hence,
\begin{align*}
=\left\{z| \frac{1}{27} \sum_{\omega \in \Omega}
 \inf_{x_{1}(\omega) \in \R} \{ z_{1}^{*}
 e^{-x_1(\omega)} +y_{1}^{*}(\omega) x_{1}(\omega)\} +
\inf_{x_{2}(\omega) \in \R} z_{2}^{*}  e^{-x_2(\omega)}  +
  y_{2}^{*}(\omega) x_{2}(\omega)  \leq z^{*}(z) \right\}
\end{align*}
We can compute each of these infimums explicitly. Note that 
$$\inf_{x_1(\omega) \in \R} z_{1}^{*} e^{-x_1(\omega)} +
y_{1}^{*}(\omega)
x_{1}(\omega) = -f^{*}(-y_{1}^{*}(\omega))$$
where $f(x) = z_{1}^{*} e^{-x}$ and $f^{*}$ denotes the convex
conjugate of $f$. Recall that for $g :
\mathbb{R} \to \mathbb{R}$ and $a \in \R$ (See \cite{ConvAnal})
\begin{align*}
&(a g(\cdot))^{*}(x^{*}) = a g^{*}(x^{*}/a) \\
&(g(a \cdot ))^{*}(x^{*}) = g^{*}(x^{*}/a) \\
g^{*}(x^{*}) &= \left\{ \begin{array}{ccc} x^{*} \ln(x^{*}) - x^{*}
& \text{ if } x^{*}>0 \\ 0 & \text{ if } x^{*} = 0 \\ \infty \text{ 
otherwise}\end{array} \right.
\text{ when } g(x) = e^{x}.
\end{align*}
Hence the objective function becomes
\begin{align*}
\left\{ z | \frac{1}{27} \sum_{\omega \in \Omega} y _{1}^{*}(\omega)
 - y_{1}^{*}(\omega) \ln \left(\frac{y_{1}(\omega)^{*}}{z_{1}^{*}} \right) 
+ y_{2}^{*}(\omega) - y_{2}^{*}(\omega) \ln
\left( \frac{y_{2}^{*}(\omega)}{z_{2}^{*}} \right)  \leq z^{*} (z) \right\}
\end{align*}
when $y_1(\omega), y_2(\omega), z_1, z_2 \neq 0$. 

Therefore, when $y_1(\omega), y_2 (\omega), z_1, z_2 \neq 0$, we have
the following formulation of the dual problem: find the supremum of
\begin{align*} 
\left\{ (z_1,z_2)^{T} \left| \frac{1}{27} \sum_{\omega \in \Omega} y _{1}^{*}(\omega)
 - y_{1}^{*}(\omega) \ln \left(\frac{y_{1}^{*}(\omega)}{z_{1}^{*}} \right) 
+ y_{2}^{*}(\omega) - y_{2}^{*}(\omega) \ln
\left( \frac{y_{2}^{*}(\omega)}{z_{2}^{*}} \right)  \leq z_1^{*} z_1 +
z_{2}^{*} z_2 \right\} \right.
\end{align*}
\begin{align*}
\text{subject to } & \left( \begin{array}{c} z^{*}_1 \\ z^{*}_2
\end{array} \right) 
\in \R^2 \setminus \left\{ \left( \begin{array}{c} 0 \\ 0 \end{array}
\right) \right\}
\\ 
& \left( \begin{array}{c} y^{*}_{1}(\omega) \\ y^{*}_{2}(\omega) \end{array}
\right) \in A_{3}(\omega) \; \; \forall \omega \in \Omega 
\end{align*}

When either $z_1$ or $z_2$ equals $0$, we only consider $x_2$ or $x_1$, respectively, in 
our objective because the other terms vanish. Likewise, the case that $y_1(\omega) = 0$ eliminates
the expressions with $y_1(\omega)$ in the objective, because of the conjugation result on 
the previous page. The case that $y_{2}(\omega)=0$ is symmetric. 

\printbibliography

\end{document}